\newtheorem{theorem}{Theorem}
\newtheorem{lemma}{Lemma}
\newcommand{\RNum}[1]{\uppercase\expandafter{\romannumeral #1\relax}}
\renewcommand{\qedsymbol}{$\blacksquare$}
\title{Efficient Wireless Power Transmission}
\begin{document}
\title{Optimal and Near-Optimal Policies for Wireless Power Transfer in Energy-Limited and Power-Limited Scenarios}
%
%
%

\author{\IEEEauthorblockN{Roohollah~Rezaei\IEEEauthorrefmark{1},  Mohammad~Movahednasab\IEEEauthorrefmark{1}, Naeimeh Omidvar\IEEEauthorrefmark{1} and Mohammad Reza Pakravan\IEEEauthorrefmark{1}} \\
\IEEEauthorblockA{\IEEEauthorrefmark{1}Department of Electrical Engineering, Sharif University of Technology, Tehran, Iran.}
}
\maketitle

\begin{abstract}
Radio frequency wireless power transfer (RF-WPT) is an emerging technology that enables transferring energy from an energy access point (E-AP) to multiple energy receivers (E-Rs), in a wireless manner. In practice, there are some restrictions on the power level or the amount of energy that the E-AP can transfer, which need to be considered in order to determine a proper power transfer policy for the E-AP. In this paper, we formulate the problem of finding the optimal policy for two practical scenarios of power-limited and energy-limited E-APs. The formulated problems are non-convex stochastic optimization problems that are very challenging to solve. We propose optimal and near-optimal policies for the power transfer of the E-AP to the E-Rs, where the optimal solutions require statistical information of the channel states, while the near-optimal solutions do not require such information and perform well in practice. Furthermore, to ensure fairness among E-Rs, we propose two fair policies, namely Max-Min Fair policy and quality-of-service-aware Proportional Fair policy. MMF policy targets maximizing the minimum received power among the E-Rs, and QPF policy maximizes the total received power of the E-Rs, while guaranteeing the required minimum QoS for each E-R. Various numerical results demonstrate the significant performance of the proposed policies.
\end{abstract}

\begin{IEEEkeywords}
Wireless power transfer,  optimal policy, fairness, stochastic optimization, non-convex, Lyapunov optimization theory, MDPP.
\end{IEEEkeywords}

\section{Introduction}
Wireless power transfer (WPT) is considered as a key enabling technology for prolonging the lifetime of wireless networks. In many applications such as sensor networks, recharging batteries of  wireless nodes is a costly and time-consuming process. Moreover in some applications, such as medical implants inside human bodies, replacement of batteries is highly difficult and almost impractical.  To overcome these difficulties, WPT is proposed as a promising approach that provides continuous, stable and controllable energy resource  to wireless devices over the air \cite{Kang2015}. A similar concept to WPT is energy harvesting, which enables scavenging energy from ambient resources such as solar, wind and radio signals. However, the key advantage of WPT over energy harvesting is stability and controllability of the energy source.

In general, there are two types of WPT, regarding the technology behind. The first type is based on magnetic induction, in which energy is transferred from an E-AP to E-Rs by inductive coupling or magnetic resonance. The second type uses the radio frequency (RF) to transfer energy from one place to another. The latter approach has several advantages over the former approach, as follows. First of all, using radio frequency waves, WPT can be combined with wireless information transfer (WIT). Moreover, it covers longer transmission ranges, requires simpler receiver structure and better supports multiple receivers than the magnetic approach   \cite{Mou2015}. Due to the aforementioned benefits, throughout this paper, we mainly focus on the RF type of WPT systems.  Three different scenarios can be considered for information and energy transfer using RF transmission. The first scenario, known as radio frequency wireless power transfer (RF-WPT), considers transfer of power from an access point (AP) to one or more receivers \cite{Yang2014}-\cite{Zeng2015}. 
 The second scenario, known as simultaneous wireless information and power transfer (SWIPT),  considers transferring both power and information to the receivers \cite{Varshney2008}-\cite{Nasir2013}. In the third scenario, known as wireless powered communication network (WPCN),  the AP transmits both  power and information to the receivers who then use the received power for their uplink information transmission toward the AP \cite{Ju2014a}-\cite{Ju2014c}.  Similar to the previous works in \cite{Yang2014}-\cite{Zeng2015}, in this paper, we focus on the first scenario, i.e., the wireless power transfer. It should be noted that our proposed approaches for the power transfer can be easily extended to the case where both power and information is considered. 

In most RF-WPT systems, the energy access point (E-AP) uses beamforming to concentrate its transferred energy toward the energy receivers (E-Rs). For this purpose, the E-AP needs to know the channel state information (CSI) of the link toward each receiver to do beamforming. Several efforts have been done in the literature for estimating the channel coefficients \cite{Yang2014}-\cite{Zeng2015}. For example, Yang et al. in \cite{Yang2014} proposed a CSI estimation method in which  E-AP sends a known training sequence to the E-R, and then,  the E-R feeds back the observed state of the channel in the uplink to the E-AP. 
However, this procedure requires complex computations at the receivers side,  which may not be feasible in some applications where the computational resources of the receivers are very limited, e.g., wireless sensor networks. Lee, et al. \cite{Lee2016} and Xu, et al. \cite{Xu2014} proposed some low complexity methods based on a one-bit feedback scheme for estimating the channel via the receivers. Zeng, et al. \cite{Zeng2015} assumed that the channels between the E-AP and E-Rs are reciprocal and proposed that E-Rs send training symbols for the E-AP to estimate the channel.

It should be noted that most of the existing works in the literature focus on short-term optimization of the policies. In such scenarios, the time horizon is divided into small timeslots with fixed lengths equal to the channel coherence time, and the CSI of the channels at each timeslot is regarded as a fixed deterministic parameter in the formulation of the optimization problem, which is estimated at the beginning of each timeslot. The problem of finding  the optimal policy is then formulated as a deterministic optimization problem which is solved at each timeslot, independently.
As a consequent of such short-term solutions, the resources of each timeslot are not preserved 

Such short-term solutions lack a global view of the long-term CSI, and do not incorporate the long-term channel fluctuations in the transmission policy. For example, if the situation of a channel is poor in a timeslot, it cannot preserve the resources of this timeslot for better and more effective use in the upcoming timeslots which may have better CSI. Consequently, such short-term solutions cannot provide an efficient transmission policy for the network in a long-term average sense.
In contrast, by considering the long-term optimization of the policy, the energy resources of the E-APs can be utilized efficiently by avoiding transmission in the case of poor channel conditions and saving the energy to be used for transmission in the later timeslots when the channel condition is better. 


There are  few works on long-term optimizations  in the related literature. Dong, et al. \cite{Dong2016} considered a scenario in which one single-antenna AP transfers energy and information to multiple single-antenna receivers. They aimed at minimizing the average transmitted power subject to energy constraints and stability constraints  of the information queues in the AP. They then used a Lyapunov approach and proposed  a near-optimal solution to the problem. However, they only considered a single-antenna scenario for the AP and assumed that the channels between the AP and each of the receivers are totally separate (i.e., no interference is considered between them).

 Choi, et al. \cite{Choi2015} considered a scenario in which  the AP transfers energy to multiple receivers in the downlink, where each receiver has an uplink queue of the information waiting for receiving enough energy so as to transmit in the uplink when their channel condition is good enough.   The authors used a Lyapunov approach to minimize the transmitted energy subject to the stability of the information queues, and proposed a near-optimal solution.  Furthermore, Biason, et al. \cite{Biason2016} considered an AP that transfers energy to two receivers and receives their uplink information as well. The authors used Markov decision theory to maximize the minimum received information rate of the receivers. However, their proposed method requires  knowing the explicit  of the channel state distribution to obtain the optimal solution.

It should be noted that RF-WPT systems are considered as an important main component in SWIPT and WPCN systems as well. Consequently, finding the optimal energy transfer policy in RF-WPT scenario is an important optimization problem, which not only helps to efficiently design RF-WPT systems, but also is an important primary milestone for solving the corresponding problems in SWIPT and WPCN systems. Therefore, in this paper, we focus on RF-WPT systems and address optimal fair policies for RF-WPT. We consider an RF-WPT scenario in which an E-AP transfers energy to multiple E-Rs. The E-AP is equipped with multiple antennas and performs beam-forming to concentrate energy toward each E-R.  

We investigate both power-limited and energy-limited cases for the energy source of the E-AP. In the power-limited case where E-AP is connected to an electrical grid we aim to maximize the  average total received energy of the E-Rs subject to a maximum power budget at the E-AP. Moreover, in  the energy-limited case, the E-AP is connected to a battery with limited energy. Solar panels in rural areas, which
harvest the sun's energy during day time to provide electricity during night time  are good examples for energy-limited E-APs \cite{Akinyele2015}.
In this case, we aim to minimize the average  transmitted energy of the E-AP while providing the required received energy of each E-R.

 In the power-limited case, as the power budget is limited,  maximizing the total received energy of the E-Rs may lead to severe unfairness among them. This is because the E-AP needs to transmit less energy toward the nearer E-Rs than the farther E-Rs  to deliver the same amount of required energy to them, and hence, the E-AP tends to serve the nearer E-Rs only. This phenomena  is known as the  near-far problem \cite{Bi2015}. In order to maintain fairness among E-Rs,  previous works such as \cite{Biason2016} have focused on transferring equal amounts of energy to the E-Rs.


However, this results in  severe degradation in the performance of the whole system (in terms of the total transmit power) when there exists an E-R that is too far from the E-AP comparing to the other E-APs.  To alleviate the aforementioned fairness issue we propose two fairness models and our second model   obtains a reasonable performance, in addition to providing a fair distribution of power among E-Rs.

In the first model, which is called \textit{Max-Min Fairness} (MMF), we maximize the minimum received power among E-Rs. It is a typical fairness model which tries to transfer equal power to receivers irrespective of the distance of the receivers from the E-AP. Our algorithm does not need the distribution of CSI and its solution is applicable to several E-Rs compared to  \cite{Biason2016}.
In the second model we consider the sum of the logarithm function of the received energy for each E-R, known as \textit{ proportional fairness} \cite{Kelly1998}. In this model, the total utility increases if we decrease an amount of energy from a near E-R (which receives more power) and add the same amount to a farther E-R. The amount of increase in utility is proportional to the unfairness among E-Rs. Since the transmitter has to consume more power to transfer energy to farther receivers,  proportional fairness attains a trade-off between fairness and performance. In addition, in this model, we guarantee a minimum power for each receiver to provide the minimum required power of it. These two fairness models alleviate the severe unfairness among E-Rs and the second model tries to provide reasonable performance in scenarios where some E-Rs are far away from the E-AP, compared to others.          

In this paper, we formulate the aforementioned scenarios and propose novel stochastic optimization formulations for each scenario. Then, using some stochastic optimization techniques, we propose optimal and near-optimal solutions for the formulated problems. In more details, we first focus on the energy-limited case and derive an optimal policy for energy transfer from the E-AP to a single E-R (which both are equipped with multiple antennas to transmit and receive, respectively). The optimal solution requires some information (that will be discussed later) on the distribution of the CSI. Such information may not be available in practice. Therefore, we then propose a Min Drift Plus Penalty (MDPP) algorithm based on Lyapunov Optimization theory \cite{Neely2010}, which does not require to know the CSI distribution and is shown that attains a near-optimal solution. Next, we focus on the power-limited case, and derive optimal and near-optimal energy transfer policies.
 Finally, we focus on the near-far problem and use the MDPP algorithm to obtain a near-optimal solution of the two fairness models mentioned before. The main contributions of this paper can be summarized as follows:
\begin{itemize}
\item
The power-limited and energy-limited WPT problems are formulated with novel stochastic optimization problems.
\item
For each of the power-limited and energy-limited WPT scenarios, a closed-form expression for the optimal solution is derived. Moreover, near-optimal policies, which do not require any information on the CSI distribution, are also proposed.
\item
Furthermore, to ensure fairness among E-Rs, various fairness models are considered and near-optimal energy transfer policies are proposed for the formulated stochastic optimization problems.
\end{itemize}

The rest of the paper is organized as follows.  Section \RNum{2} introduces the system model and problem formulations. The proposed solutions for the energy-limited and power-limited WPT cases are described in Section \RNum{3} and \RNum{4}, respectively. The considered fairness models and their associated proposed solutions are presented in Section  \RNum{5}. Numerical results are presented in Section \RNum{6}, and finally, Section \RNum{7} concludes the paper.
%

\section{System Model and Problem Formulation} \label{sec:SysMod}
Consider a network with one E-AP and $K$ energy receivers, as shown in Fig. \ref{fig:SysMod} . The E-AP and E-Rs  are  equipped with $N$  and $M$ antennas, respectively, where $N >M$.
 The E-AP transfers energy to the E-Rs by transmitting a tone signal (for the sake of saving bandwidth), and employs  beamforming in order to  focus the transmit energy  toward  each E-R.
\begin{figure}
\centering
\includegraphics[height = 0.3\linewidth]{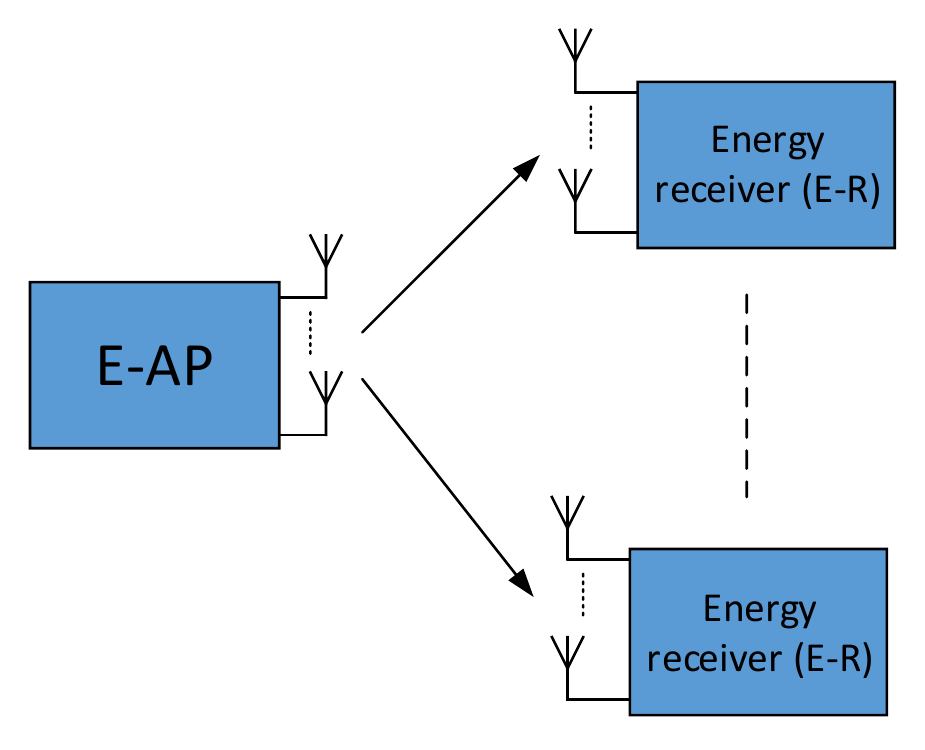}
\caption{System model}
\label{fig:SysMod}
\end{figure}
\begin{figure}
\centering
\includegraphics[height = 0.25\linewidth]{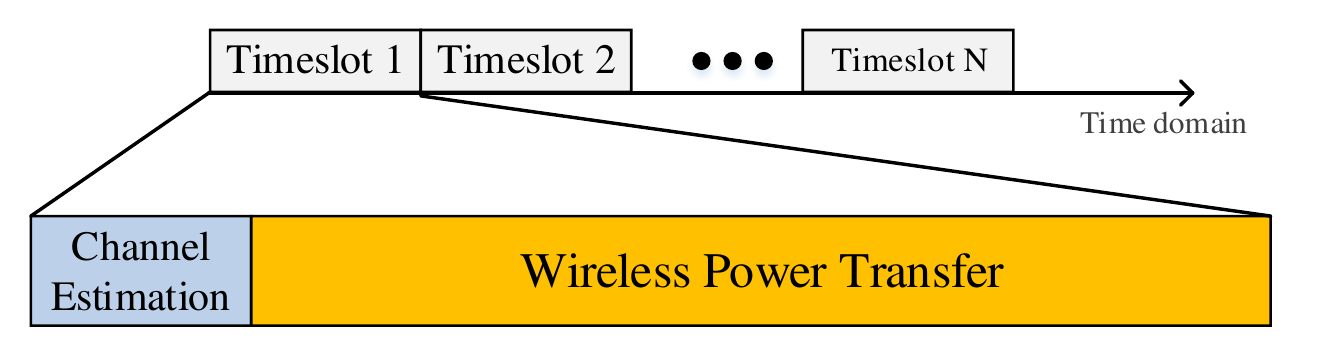}
\caption{Time-slotted system}
\label{fig:timeslotView}
\end{figure}

We consider a time-slotted system in which the  time domain is divided into timeslots of fixed length equal to the coherence time of the channels, as illustrated in Fig. \ref{fig:timeslotView}. As shown in Fig. \ref{fig:timeslotView}, at the beginning of each timeslot, a small portion of the timeslot is reserved for estimating the CSI of the outgoing channels by the E-AP.  The E-AP then uses  the rest of the time-slot for wireless power transfer.  Moreover, same as in many standard channel models \cite{Yang2014}-\cite{Ju2014c}, we consider a quasi-static flat-fading channel model for the channels between the E-AP and the E-Rs.  Note that the assumption of being quasi-static implies that the CSI of the channels remain constant during each time-slot and vary from one time-slot to the next one. 

%

We consider the MIMO Rician fading channel model \cite{Zeng2015}, in which the equivalent baseband channel between the E-AP and each E-R in the $l$-th timeslot is modeled as a complex matrix $\mathbf{H}_i[l]$ for each E-R. The $(m,n)$ entry of the channel matrix represents attenuation and delay for the link between   $m$-th antenna of  receiver and  $n$-th antenna of E-AP. This channel matrix remains constant during a timeslot and follows an independent identical distribution in successive timeslots.

%
%

\subsection{E-AP Transmission}   
{\color{black}
In each timeslot, the E-AP transmits a constant amplitude tone signal. The amplitude and phase of transmitted tone from each antenna is determined by entries of the  beamforming vector $\mathbf{x}[l] \in \mathbb{C}^{N\times1}$. Based on the described channel model, the received signal in the receiver $i$ is given by
\begin{equation}
\mathbf{y}_i(t) = \mathbf{H}_i[l]\mathbf{x}[l]+\mathbf{z}_i(t),\ \ lT \le t < (l+1)T,
\end{equation}
where $\mathbf{y}_i\in \mathbb{C}^{M\times1}$ denotes the baseband signal of receiver $i$,  and $\mathbf{z}_i \in \mathbb{C}^{M\times 1}$ represents the noise at receiver $i$. 

In each timeslot, the  E-AP chooses beamforming vector following a transmission policy. A transmission policy may be a function of current and/or previous channel states and  transmission history, also it may be a deterministic or probabilistic function.   In this paper we deal with policies which are deterministic functions of channel state and transmission history, but we show that they are optimal or near optimal among all possible policies. }

%

\subsection{E-R Reception}
The receivers use a rectifier to convert the received RF signal to a DC current, as shown in Fig. \ref{fig:RecvStr}. This current charges the batteries of the receivers.  The amount of harvested energy in receiver $i$, in a single timeslot, is denoted by $Q_i(t) = \zeta T\lVert\mathbf{y}_i(t)\rVert^2$, where $\zeta \in [0,1]$ models the efficiency in energy conversion. Without loss of generality, in this paper, we assume that $\zeta$ and $T$ equal one. We neglect the energy contribution of noise, as in \cite{Bi2014}, then we have

\begin{equation}
Q_i[l] \approx\left\lVert \mathbf{H}_i[l]\mathbf{x}[l]  \right \rVert^2 = Tr(\mathbf{W}_i[l]\mathbf{x}[l]\mathbf{x^\ast}[l]),
\end{equation}
where $\mathbf{W}_i[l]=\mathbf{H}_i^\ast[l] \mathbf{H}_i[l]$ and $Tr(\mathbf{A})$ is the trace of square matrix $\mathbf{A}$.

\begin{figure}
\centering
\includegraphics[height = 0.4\linewidth]{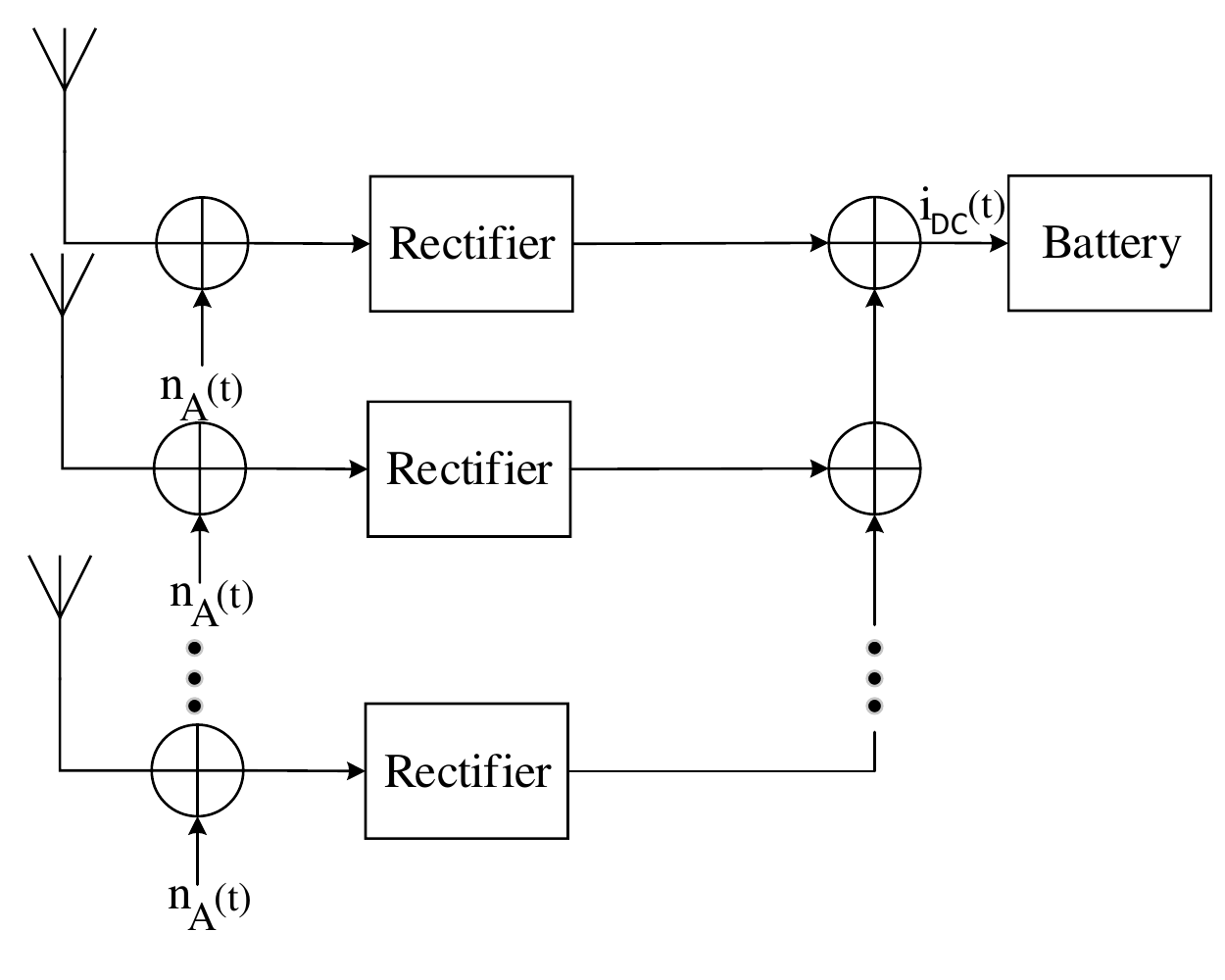}
\caption{Receiver structure}
\label{fig:RecvStr}
\end{figure}


\subsection{Long-Term Parameters}
We focus on long-term energy transfer optimization problems. The transmitted energy from the E-AP in timeslot $l$ equals $Tr(\mathbf{x}[l]\mathbf{x}^\ast[l])$,  hence the  expected value of the time averaged transmitted power in long term is as follows:
\begin{equation}
\bar{Q}_{AP} = \lim_{L \rightarrow \infty}\frac{1}{L}\sum_{l=0}^{L-1}\mathbb{E}[Tr(\mathbf{x}[l]\mathbf{x}^\ast[l])].
\end{equation}
It should be noted that $\mathbb{E}[.]$ denotes the average operator, which in the above equation is with respect to the randomness of the channel and policy (for policies with randomness) .
%
%
Similarly the expected value of the time averaged received power in receiver $i$ equals,
\begin{equation}
\bar{Q}_i = \lim_{L \rightarrow \infty}\frac{1}{L}\sum_{l=0}^{L-1}\mathbb{E}[Tr(\mathbf{W_i}[l]\mathbf{x}[l]\mathbf{x}^\ast[l])],\quad  \forall i = 1,\ldots,K.
\end{equation}

\section{Minimizing Average Transmitted Energy in Energy-limited Case}
\label{sec:MinTran}
In this section we consider a WPT system consisted of one E-AP that has limited energy budget (i.e., battery-operated E-AP)  and multiple E-Rs that require a minimum level of received power for their normal operations. A popular application example of  this scenario is an E-AP equipped with solar panels in a rural area which charges its batteries during the day and  transmits the stored energy toward the E-Rs during the night. 
An optimal transmitting policy  aims to minimize the average transmitted power of the E-AP  so as to maximize its lifetime, while at the same time, satisfying  the minimum power requirement of the each E-R. Accordingly, finding the optimal transmitting policy can be formulated by the following stochastic optimization problem:


\begin{mini!}|l|
 {\{\mathbf{x}(\mathbf{H})\} }{ \bar{Q}_{AP} = \lim_{L \rightarrow \infty}\frac{1}{L}\sum_{l=0}^{L-1}\mathbb{E}\left[Tr(\mathbf{x}[l]\mathbf{x}^\ast[l])\right]}{\label{P1}}{}
  \addConstraint{   \bar{Q}_i = \lim_{L \rightarrow \infty}\frac{1}{L}\sum_{l=0}^{L-1}\mathbb{E}[Tr(\mathbf{W}_i[l]\mathbf{x}[l]\mathbf{x}^\ast[l])]\ge P_i,\quad \forall  i = 1,\ldots,K  }\label{equ:P1ExpConst} 
  \addConstraint{{Tr(\mathbf{x}[l]\mathbf{x}^\ast[l])}\le P_{peak},\quad \forall  l \ge 0, }\label{equ:P1peak} 
 \end{mini!}\color{black}
where constraint \eqref{equ:P1ExpConst} guaranties the minimum power requirement of the E-Rs ($ P_i,\ \forall  i = 1,\ldots,K$), and constraint \eqref{equ:P1peak} is due to the limitation on the peak transmission power, $P_{peak}$, of the E-AP.
%

The optimization problem \eqref{P1} is highly non-trivial and includes some challenges that need to be addressed properly. First, due to constraint \eqref{equ:P1ExpConst}, the problem is non-convex. Moreover, it is a stochastic optimization problem that do not have closed-form expression for the objective function as well as constraint \eqref{equ:P1ExpConst}. In the rest of this section,  we first propose an optimal solution for the special case of single E-R. This solution provides a useful insight for finding  the optimal policy for the general case of multiple E-Rs. Then, we propose a transmission policy that does not require the knowledge of channel statistics, and determines the beamforming vector in each timeslot  based on  the instantaneous CSI of that timeslot  and the past transmission history. We analyze the performance of this algorithm and show that it satisfies the constraints of problem  \eqref{P1} and its performance is near to that of the optimal policy. 

Before proceeding, we prove the following important lemma which will be used several times later.

\begin{lemma}\label{Lemma:baseLemma}

For any Hermitian Symmetric matrix $\mathbf{W}\in \mathbb{C}^{N \times N}$, and $\mathbf{v}\in \mathbb{C}^{N \times 1}$ we have, 
$$Tr(\mathbf{W}\mathbf{v}\mathbf{v}^\ast) \leq \lambda_{max}||\mathbf{v}||^2,$$
and the equality holds when  $\mathbf{v}=||\mathbf{v}||\mathbf{u}_{max}$, where $\mathbf{u}_{max}$ is the eigenvector corresponding to the maximum eigenvalue of $\mathbf{W}$.
\end{lemma}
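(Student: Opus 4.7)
The plan is to reduce the trace expression to a Rayleigh-type quotient and then apply the spectral theorem for Hermitian matrices. First I would use the cyclic property of the trace together with the fact that $\mathbf{v}^*\mathbf{W}\mathbf{v}$ is a scalar to rewrite
\begin{equation}
\Tr(\mathbf{W}\mathbf{v}\mathbf{v}^\ast) \;=\; \Tr(\mathbf{v}^\ast\mathbf{W}\mathbf{v}) \;=\; \mathbf{v}^\ast\mathbf{W}\mathbf{v}.
\end{equation}
This converts the original inequality into the classical Rayleigh quotient bound $\mathbf{v}^\ast\mathbf{W}\mathbf{v}\le\lambda_{\max}\|\mathbf{v}\|^2$.

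Next I would invoke the spectral theorem: since $\mathbf{W}$ is Hermitian, there exists an orthonormal basis of eigenvectors $\{\mathbf{u}_1,\dots,\mathbf{u}_N\}$ with real eigenvalues $\lambda_1\le\dots\le\lambda_N=\lambda_{\max}$, so that $\mathbf{W}=\sum_{i=1}^N\lambda_i\mathbf{u}_i\mathbf{u}_i^\ast$. Expanding $\mathbf{v}=\sum_{i=1}^N\alpha_i\mathbf{u}_i$ with $\alpha_i=\mathbf{u}_i^\ast\mathbf{v}$, orthonormality gives $\|\mathbf{v}\|^2=\sum_i|\alpha_i|^2$ and
\begin{equation}
\mathbf{v}^\ast\mathbf{W}\mathbf{v} \;=\; \sum_{i=1}^N \lambda_i |\alpha_i|^2 \;\le\; \lambda_{\max}\sum_{i=1}^N |\alpha_i|^2 \;=\; \lambda_{\max}\|\mathbf{v}\|^2,
\end{equation}
which establishes the desired inequality.

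Finally, for the equality case I would substitute $\mathbf{v}=\|\mathbf{v}\|\mathbf{u}_{\max}$ directly: since $\mathbf{W}\mathbf{u}_{\max}=\lambda_{\max}\mathbf{u}_{\max}$ and $\|\mathbf{u}_{\max}\|=1$, one obtains $\mathbf{v}^\ast\mathbf{W}\mathbf{v}=\lambda_{\max}\|\mathbf{v}\|^2$, matching the upper bound. There is no real obstacle here; the argument is essentially the standard proof of the Rayleigh--Ritz characterization of the maximum eigenvalue, and the only care needed is to justify passing from $\Tr(\mathbf{W}\mathbf{v}\mathbf{v}^\ast)$ to the scalar form and to rely on the Hermitian assumption to guarantee an orthonormal eigenbasis and real eigenvalues.
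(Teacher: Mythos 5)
Your proof is correct and follows essentially the same route as the paper's: both diagonalize the Hermitian matrix $\mathbf{W}$ via the spectral theorem, express $\mathbf{v}$ in the eigenbasis coordinates, and bound the resulting weighted sum $\sum_i\lambda_i|\alpha_i|^2$ by $\lambda_{\max}\|\mathbf{v}\|^2$. Your write-up is in fact slightly cleaner, since you make the cyclic-trace reduction to the Rayleigh quotient explicit and verify the equality case by direct substitution rather than by inspection of which coordinate is nonzero.
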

\begin{proof}
Since $\mathbf{W}$ is Hermitian symmetric, it can be written as  $\mathbf{W}=\mathbf{U}^\ast\mathbf{\Lambda}\mathbf{U}$, where $\mathbf{U}$  and $\mathbf{\Lambda}$ are some unitary and   diagonal matrices, respectively. Consequently, we have,
\begin{align*}
Tr(\mathbf{W}\mathbf{v}\mathbf{v}^\ast) &=Tr(\mathbf{v}^\ast\mathbf{U}^\ast\mathbf{\Lambda}\mathbf{U}\mathbf{})  \\
&=Tr(\mathbf{z}^\ast\mathbf{\Lambda}\mathbf{z})=\sum_{i=1}^N \lambda_i|z_i|^2   \\
&\le\lambda_{max}||\mathbf{v}||^2.  
\end{align*}
Note that the  inequality above is tight and the equality can be achieved by adopting $\mathbf{v}$ in direction of $\mathbf{u}_{max}$, i.e., $\mathbf{v} =||\mathbf{v}||\mathbf{u}_{max} $ as a result $\mathbf{z}$ will have only one none zero element equal to $||\mathbf{v}||^2$ at index that  corresponds to $\lambda_{max}$.
\end{proof}

\subsection{{\color{black}Optimal Transmission Policy for Single Receiver}}

The following theorem states the optimal policy for the special case of single E-R.

\begin{theorem}
\label{theo:MinTran}
The optimal solution to the problem of equation \eqref{P1} in the case of one E-R is as follows:
\begin{align}
\mathbf{x}[l] =  \left\{
\begin{array}{cc}
	P_{peak}\mathbf{u}_{max}[l],\ \   &\lambda_{max}[l]\ge \lambda_{Th} \\
0, & otherwise,
\end{array} \right.
\label{equ:garant2}
\end{align}
where $\mathbf{u}_{max}[l]$ is the eigenvector of matrix $\mathbf{W}_1[l]$ with maximum eigenvalue ($\lambda_{max}[l]$) in timeslot $l$. $\lambda_{Th}$ must satisfy the following equality:
\begin{equation} \label{equ:mintrans}
\int_{\lambda_{Th}}^\infty \alpha f_{\lambda_{max}}(\alpha)d\alpha=\frac{P_1^{recv}}{P_{peak}}, 
\end{equation}
where $f_{\lambda_{max}}(\alpha)$ denotes the probability distribution function (pdf) of $\lambda_{max}$. As $\lambda_{max}[l]$ is a function of $\mathbf{H}_i[l]$, its distribution in any slot is the same as other slots and we can drop the index for ease of notation.
\end{theorem}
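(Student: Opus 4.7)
The plan is to reduce the long-horizon vector problem to a scalar, per-slot, threshold problem in two stages, and then apply a Lagrangian / exchange argument.

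First, I would fix the beamforming direction using Lemma \ref{Lemma:baseLemma}. The per-slot transmit cost $\lVert\mathbf{x}[l]\rVert^{2}$ depends only on the norm of $\mathbf{x}[l]$, while the per-slot contribution to the receiver's constraint \eqref{equ:P1ExpConst} is $\Tr(\mathbf{W}_{1}[l]\mathbf{x}[l]\mathbf{x}^{\ast}[l])$, which by the lemma is upper-bounded by $\lambda_{\max}[l]\lVert\mathbf{x}[l]\rVert^{2}$, with equality when $\mathbf{x}[l]\propto\mathbf{u}_{\max}[l]$. Since we are minimizing transmit energy subject to a lower bound on long-term received energy, given any feasible policy I can replace $\mathbf{x}[l]$ by $\lVert\mathbf{x}[l]\rVert\,\mathbf{u}_{\max}[l]$ without changing the objective while only relaxing \eqref{equ:P1ExpConst}. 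Hence the optimum lies in the class $\mathbf{x}[l]=\sqrt{E[l]}\,\mathbf{u}_{\max}[l]$ with $E[l]\in[0,P_{\mathrm{peak}}]$ (the peak bound \eqref{equ:P1peak} becoming automatic), and the problem collapses to choosing the scalar gains $E[l]$.

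Second, because $\mathbf{H}_{1}[l]$, and therefore $\lambda_{\max}[l]$, is i.i.d.\ across slots, a standard time-averaging (stationarization) argument lets me restrict attention to policies where $E[l]$ is a (possibly randomized) measurable function of the current $\lambda_{\max}[l]$ alone; any history-dependent policy can be matched in time-average by the randomized stationary policy with the same conditional distribution of $E$ given $\lambda_{\max}$. The problem then becomes the one-shot stochastic program
\begin{equation*}
\min_{E(\cdot)}\; \mathbb{E}\!\left[E(\lambda_{\max})\right]\quad\text{s.t.}\quad \mathbb{E}\!\left[\lambda_{\max}\,E(\lambda_{\max})\right]\ge P_{1},\ \ 0\le E(\lambda_{\max})\le P_{\mathrm{peak}}.
\end{equation*}

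Third, I would solve this scalar program by a Lagrangian / bang-bang argument. Introducing a multiplier $\nu\ge 0$ for the receiver constraint, the Lagrangian integrand $E\,(1-\nu\lambda_{\max})$ is linear in $E$, so its pointwise minimizer on $[0,P_{\mathrm{peak}}]$ is $E=P_{\mathrm{peak}}$ when $\lambda_{\max}>1/\nu$ and $E=0$ when $\lambda_{\max}<1/\nu$. Setting $\lambda_{Th}:=1/\nu$ gives exactly the threshold policy \eqref{equ:garant2}. The threshold is then pinned down by requiring the receiver constraint to be tight (otherwise one could shrink $E$ on any active slot and decrease the objective), which yields
\begin{equation*}
P_{\mathrm{peak}}\int_{\lambda_{Th}}^{\infty}\alpha\,f_{\lambda_{\max}}(\alpha)\,d\alpha=P_{1},
\end{equation*}
i.e.\ equation \eqref{equ:mintrans}. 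An equivalent, Lagrangian-free swap argument also works: if a putatively optimal policy transmits a positive amount in a slot with $\lambda_{\max}=\lambda_{a}$ while transmitting less than $P_{\mathrm{peak}}$ in a slot with $\lambda_{\max}=\lambda_{b}>\lambda_{a}$, shifting a small amount of transmit energy from the $\lambda_{a}$-slot to the $\lambda_{b}$-slot strictly lowers transmit energy while preserving received energy, contradicting optimality.

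The main obstacle I expect is the stationarization step: rigorously showing that restricting to policies of the form $E[l]=g(\lambda_{\max}[l])$ loses no optimality, and handling the (measure-zero, hence harmless) event $\lambda_{\max}=\lambda_{Th}$. Given i.i.d.\ channels and a continuous distribution of $\lambda_{\max}$, these are standard, but they are the only non-mechanical pieces; everything else is a direct consequence of Lemma \ref{Lemma:baseLemma} and linearity.
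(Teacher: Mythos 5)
Your proposal is correct and follows essentially the same route as the paper's proof: reduction to a stationary per-slot policy that depends only on the current channel state (the paper cites Neely for this step), Lemma \ref{Lemma:baseLemma} to fix the beam along $\mathbf{u}_{\max}$, a bang-bang/threshold characterization of the optimal scalar power allocation, and tightness of the received-power constraint to pin down $\lambda_{Th}$ via \eqref{equ:mintrans}. The only cosmetic difference is that you phrase the threshold optimality through a Lagrange multiplier $\nu = 1/\lambda_{Th}$ (or the equivalent swap argument), whereas the paper carries out the same comparison directly in the primal, showing that any competing policy with $\mathbb{E}[P_z]<\mathbb{E}[P_x]$ satisfies $\mathbb{E}[\lambda_{\max}P_z]-P_1^{recv}\le \lambda_{Th}\left(\mathbb{E}[P_z]-\mathbb{E}[P_x]\right)<0$ and hence violates constraint \eqref{equ:P1ExpConst}.
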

\begin{proof}
See Appendix \ref{app:MinTran}.
\end{proof}

Note that Theorem \ref{theo:MinTran} introduces a two level policy as the optimal one. Under this policy,  when channel is in good condition, the E-AP should transmit with maximum power; otherwise, it should stop transmission. The channel condition is determined in terms of the gain of the best path toward the E-R, and is compared to a threshold value determined in \eqref{equ:mintrans}. If it is lower than the threshold, then the E-AP will not transmit on that timeslot. This policy seems reasonable, since avoiding transmission in bad channel conditions and saving energy for transmission in better channel conditions increases the efficiency.  The value of $\lambda_{Th}$ can be calculated by solving Eq. \eqref{equ:mintrans} with a simple line search method which requires the knowledge of the distribution of $\lambda_{max}$. 

\subsection{{\color{black}Transmission Policy for Energy-Limited Case}}
It should be noted that solving the formulated problem in  \ref{P1} is not trivial since it is a non-convex optimization problem,  and also its solution requires having knowledge of channel states distribution, which is not available in many practical cases. In this sub-section, to address the aforementioned  challenges and solve this problem, we  propose a policy which does not require the channel states distribution, using a stochastic approach based on Min Drift Plus Penalty (MDPP) algorithm \cite{Neely2010}.  We will then show that the time-averaged transmit power under the proposed policy, denoted by   $\bar{Q}^{tran}_{MDPP}$, is  close to the one under the optimal policy, denoted by, $\bar{Q}^{tran}_{opt}$.

The pseudo-code of the proposed transmission policy is presented in Algorithm \ref{alg:TranMin}. The  proposed policy follows a similar  two-level transmission strategy as in the optimal solution for the single E-R case. 
The process $Z_i,\ \forall i = 1,2,\ldots,K$ introduced in Algorithm \ref{alg:TranMin} represents a virtual queue that captures the deviation of the average received power of the E-Rs from their minimum requirement denoted in constraint \eqref{equ:P1ExpConst}. The E-AP steers its beam toward E-Rs with larger queue backlog and/or better channel condition,  as a result the E-Rs which have received less power in previous timeslots have higher priority for receiving  power. The parameter $V$  in Algorithm \ref{alg:TranMin} is a control variable which maintains the trade-off between minimization of average transmitted power of the E-AP and the duration of the short-time deviations from the minimum power requirements of the E-Rs.
Under this algorithm, the E-AP estimates the channel at the beginning of $l$'th  timeslot, and calculates the weighted sum  matrix $\mathbf{W'}[l]$, as in step \ref{step:sumchannel}\footnote{Throughout this paper, we let $\mathbf{I}_m$ denote the $m\times m$ identity matrix.}. Then if the largest eigenvalue of $\mathbf{W'}[l]$, denoted by $\lambda^\mathbf{W'}_{max}[l]$ is greater than zero, the E-AP will transmit with its maximum power in direction of the $\mathbf{u}_{max}^{\mathbf{W'}}[l]$ that is the eigenvector with largest eigenvalue. Otherwise, the E-AP will not transmit any power.  The virtual queues will be updated at the end of timeslot. The following theorem  characterizes the optimality gap of Algorithm \ref{alg:TranMin}. 
\begin{theorem} \label{theo:TranMin}
The  E-AP transmission  policy described by  Algorithm \ref{alg:TranMin}
\begin{enumerate}[label=(\alph*)]
\item
Satisfies constraints \eqref{equ:P1ExpConst}.
\item
Yields an average transmitted power within a maximum distance of $\frac{B}{V}$ from the optimal solution: ($\bar{Q}^{tran}_{opt}\le\bar{Q}^{tran}_{MDPP}\le \bar{Q}^{tran}_{opt}+\frac{B}{V}$), where $B = \frac{K}{2}P_{peak}^2$. 
\end{enumerate}
\end{theorem}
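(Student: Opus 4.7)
The plan is to apply the standard drift-plus-penalty template of Lyapunov optimization theory \cite{Neely2010}. I would first introduce the quadratic Lyapunov function $L(\mathbf{Z}[l]) = \tfrac{1}{2}\sum_{i=1}^K Z_i^2[l]$ on the virtual queue vector $\mathbf{Z}[l]=(Z_1[l],\ldots,Z_K[l])$, which by Algorithm~\ref{alg:TranMin} evolves as $Z_i[l+1] = \max\{Z_i[l] + P_i - Q_i[l], 0\}$, so that mean-rate stability of $Z_i$ is exactly equivalent to constraint~\eqref{equ:P1ExpConst}. Squaring this recursion and using $(\max\{a,0\})^2 \le a^2$, I obtain a one-slot drift bound $\Delta(l):=L(\mathbf{Z}[l+1])-L(\mathbf{Z}[l]) \le \tfrac{1}{2}\sum_i (P_i - Q_i[l])^2 - \sum_i Z_i[l](Q_i[l] - P_i)$. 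Using $P_i\le P_{peak}$ (needed for feasibility) together with a bound $Q_i[l]\le P_{peak}$ coming from a normalization of channel gains makes the quadratic term at most $B = \tfrac{K}{2}P_{peak}^2$. Adding the penalty $V\,\mathrm{Tr}(\mathbf{x}[l]\mathbf{x}^*[l])$ to both sides and collecting the action-dependent part gives
\[
\Delta(l) + V\,\mathrm{Tr}(\mathbf{x}[l]\mathbf{x}^*[l]) \le B + \sum_{i=1}^K Z_i[l]\,P_i \;-\; \mathrm{Tr}\!\left(\mathbf{W}'[l]\,\mathbf{x}[l]\mathbf{x}^*[l]\right),
\]
where $\mathbf{W}'[l] = \sum_{i=1}^K Z_i[l]\mathbf{W}_i[l] - V\mathbf{I}_N$ is the very matrix used in Algorithm~\ref{alg:TranMin}.

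Second, I would verify that Algorithm~\ref{alg:TranMin} greedily minimizes this upper bound slot-by-slot. The only action-dependent term is $-\mathrm{Tr}(\mathbf{W}'[l]\mathbf{x}\mathbf{x}^*)$, and $\mathbf{W}'[l]$ is Hermitian symmetric, so Lemma~\ref{Lemma:baseLemma} gives $\mathrm{Tr}(\mathbf{W}'[l]\mathbf{x}\mathbf{x}^*)\le \lambda^{\mathbf{W}'}_{max}[l]\|\mathbf{x}\|^2$, with equality along $\mathbf{u}_{max}^{\mathbf{W}'}[l]$. Hence the optimal magnitude under $\mathrm{Tr}(\mathbf{x}\mathbf{x}^*)\le P_{peak}$ is $\sqrt{P_{peak}}$ when $\lambda^{\mathbf{W}'}_{max}[l]>0$ and $0$ otherwise, matching Algorithm~\ref{alg:TranMin} exactly. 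With greedy minimization established, I would invoke the existence of a stationary randomized ($\omega$-only) benchmark policy that attains $\bar{Q}^{tran}_{opt}$ while respecting~\eqref{equ:P1ExpConst}--\eqref{equ:P1peak} in expectation; under the i.i.d.\ channel assumption this is a standard consequence of the Lyapunov framework \cite{Neely2010}. Plugging this benchmark into the drift-plus-penalty inequality (which the algorithm beats pointwise) yields the slot-wise bound $\mathbb{E}[\Delta(l)\mid \mathbf{Z}[l]] + V\,\mathbb{E}[\mathrm{Tr}(\mathbf{x}^{alg}[l]\mathbf{x}^{alg*}[l])\mid\mathbf{Z}[l]] \le B + V\bar{Q}^{tran}_{opt}$.

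Finally I would telescope this inequality over $l=0,\ldots,L-1$, take total expectations, and use $\mathbb{E}[L(\mathbf{Z}[L])]\ge 0$. Dividing by $LV$ and sending $L\to\infty$ gives $\bar{Q}^{tran}_{MDPP}\le \bar{Q}^{tran}_{opt}+B/V$, which together with the trivial $\bar{Q}^{tran}_{opt}\le \bar{Q}^{tran}_{MDPP}$ (by optimality of the former) establishes~(b). For~(a), dividing the same telescoped inequality by $L$ instead and rearranging to isolate the Lyapunov term yields $\mathbb{E}[L(\mathbf{Z}[L])]/L = O(1)$, hence $\mathbb{E}[Z_i[L]]/L\to 0$ for each $i$; the queue recursion then forces $\liminf_L \tfrac{1}{L}\sum_l \mathbb{E}[P_i - Q_i[l]]\le 0$, which is exactly~\eqref{equ:P1ExpConst}. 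The main obstacle I anticipate is rigorously justifying the stationary benchmark policy attaining $\bar{Q}^{tran}_{opt}$ under both the long-term constraints~\eqref{equ:P1ExpConst} and the hard per-slot peak constraint~\eqref{equ:P1peak}; this is a Carath\'eodory-style argument on the i.i.d.\ channel state space that, while standard, must be stated carefully. A secondary subtlety is the exact value of $B$: the constant $\tfrac{K}{2}P_{peak}^2$ requires a normalization on $\lambda_{max}(\mathbf{W}_i)$ that should be made explicit when writing out the drift computation.
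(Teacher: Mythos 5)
Your proposal is correct and follows essentially the same route as the paper: the same quadratic Lyapunov function, the same one-slot drift-plus-penalty bound with $B=\tfrac{K}{2}P_{peak}^2$ (the paper's Lemma~\ref{Lem:DriftMinTran}), and the same use of Lemma~\ref{Lemma:baseLemma} to show Algorithm~\ref{alg:TranMin} greedily minimizes that bound; the only difference is that you unpack the telescoping/benchmark-policy argument that the paper delegates entirely to a citation of \cite{Neely2010}. Your remark that the constant $B$ implicitly requires $Q_i[l]\le P_{peak}$ (i.e., a normalization $\lambda_{max}(\mathbf{W}_i)\le 1$) is a fair observation that the paper's Appendix passes over without comment.
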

\begin{algorithm}[t]
\caption{E-AP algorithm in energy-limited case}\label{alg:TranMin}
\begin{algorithmic}[1]
\State \textbf{Initialization:} $l\gets 0, Z_i[0] \gets 0,\; \forall i = 1,2,\ldots,K$.
\While{(true)}
\State Estimate $\mathbf{H}_i[l],\; \forall i = 1,2,\ldots,K.$
\State $\mathbf{W}_i[l] \gets \mathbf{H}^\ast_i[l]\mathbf{H}_i[l],\; \forall i = 1,2,\ldots,K$.
\State $\mathbf{W'}[l] \gets \sum_{i=1}^K Z_i[l]\mathbf{W}_i[l]-V\mathbf{I}_N$.\label{step:sumchannel}
\If{$\lambda_{max}^{{\mathbf{W'}}}[l]>0$}
\State $\mathbf{x}[l] \gets P_{peak}\mathbf{u}_{max}^{\mathbf{W'}}[l]$,
\Else
\State $\mathbf{x}[l] \gets 0$.
\EndIf
\State $Z_i[l+1] \gets max\{Z_i[l]+P_i-Tr(\mathbf{W}_i[l]\mathbf{x}[l]\mathbf{x}^\ast[l]),0\},\; \forall  i = 1,2,\ldots,K$.
\State $l\gets l+1$.
\EndWhile
\end{algorithmic}
\end{algorithm}

\begin{proof}

{\color{black} Consider the following definitions for quadratic Lyapunov function and Lyapunov drift, respectively, 
\begin{equation}
\mathbb{L}(\mathbf{Z}[l]) \triangleq \frac{1}{2}\sum_{i=1}^K Z_i^2[l],
\end{equation}
\begin{equation}
\Delta(\mathbf{Z}[l]) \triangleq \mathbb{E}[(\mathbb{L}(\mathbf{Z}[l+1])-\mathbb{L}(\mathbf{Z}[l]))|\mathbf{Z}[l]],
\end{equation}
where $\mathbf{Z}[l]\triangleq({Z}_1[l],{Z}_2[l],...,{Z}_K[l])$. Let us define the drift-plus-penalty function as, 
\begin{equation}\label{equ:DPP}
\Delta(\mathbf{Z}[l]) + V\mathbb{E}[Q_{AP}[l]|\mathbf{Z}[l]], 
\end{equation}  
 where $Q_{AP}[l]=Tr(\mathbf{x}[l]\mathbf{x}^\ast[l])$.  The first term in Eq. \eqref{equ:DPP} is a measure of the expected total backlog increment in all virtual queue
and the second term is the expected  transmitted power from the E-AP, where both are condition on the current queue backlog. The intuition behind MDPP technique is that a proper policy minimizes this function and obtains a balance between  transmitted power and virtual queues' backlog. The following lemma establishes an upper-bound on the drift-plus-penalty function.
\begin{lemma} \label{Lem:DriftMinTran}
 The drift plus penalty function has the following upper bound:
\begin{equation}\label{equ:LyapunovIneq}
\Delta(\mathbf{Z}[l]) + V\mathbb{E}[Q_{AP}[l]|\mathbf{Z}[l]] \le B + V\mathbb{E}[Q_{AP}[l]|\mathbf{Z}[l]] + \sum_{i=1}^K Z_i[l]\mathbb{E}[  Q^{d}_i [l] |\mathbf{Z}[l]], 
\end{equation}
where $Q^{d}_i [l]=  P_i-Tr(\mathbf{W}_i[l]\mathbf{x}[l]\mathbf{x}^\ast[l])$ and $B = {K\over 2} P_{peak}^2$. 
\end{lemma}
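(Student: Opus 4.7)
The plan is the classical Lyapunov drift-plus-penalty calculation applied to the virtual queue recursion $Z_i[l+1] = \max\{Z_i[l] + Q^d_i[l], 0\}$ prescribed in Algorithm~\ref{alg:TranMin}. First I would square both sides and use the elementary inequality $(\max\{a,0\})^2 \le a^2$ to obtain
\begin{equation*}
Z_i^2[l+1] \le Z_i^2[l] + 2 Z_i[l]\,Q^d_i[l] + (Q^d_i[l])^2 .
\end{equation*}
Summing over $i = 1, \ldots, K$, multiplying by $1/2$ and rearranging gives
\begin{equation*}
\mathbb{L}(\mathbf{Z}[l+1]) - \mathbb{L}(\mathbf{Z}[l]) \le \frac{1}{2}\sum_{i=1}^K (Q^d_i[l])^2 + \sum_{i=1}^K Z_i[l]\,Q^d_i[l] .
\end{equation*}
Taking conditional expectation given $\mathbf{Z}[l]$ produces $\Delta(\mathbf{Z}[l])$ on the left, and adding the common term $V\,\mathbb{E}[Q_{AP}[l]\mid \mathbf{Z}[l]]$ to both sides puts the left-hand side into the drift-plus-penalty form of \eqref{equ:LyapunovIneq}.

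The remaining task, which I expect to be the only non-trivial step, is to absorb the residual $\frac{1}{2}\sum_i \mathbb{E}[(Q^d_i[l])^2\mid \mathbf{Z}[l]]$ into the constant $B = \frac{K}{2}P_{peak}^2$. Since both $P_i \ge 0$ and $Tr(\mathbf{W}_i[l]\mathbf{x}[l]\mathbf{x}^\ast[l]) \ge 0$, I have
\begin{equation*}
(Q^d_i[l])^2 \le \max\bigl\{P_i^2,\; (Tr(\mathbf{W}_i[l]\mathbf{x}[l]\mathbf{x}^\ast[l]))^2\bigr\} .
\end{equation*}
The peak-power constraint \eqref{equ:P1peak} yields $\|\mathbf{x}[l]\|^2 \le P_{peak}$, and Lemma~\ref{Lemma:baseLemma} applied with $\mathbf{W} = \mathbf{W}_i[l]$ and $\mathbf{v} = \mathbf{x}[l]$ bounds the received power by $\lambda_{max}^{\mathbf{W}_i}[l]\,P_{peak}$. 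Under the standard normalization that $\lambda_{max}^{\mathbf{W}_i}[l] \le 1$ together with the feasibility requirement $P_i \le P_{peak}$, each squared deviation is therefore at most $P_{peak}^2$, and summing over $i$ and halving gives exactly $B$.

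Substituting this uniform bound back into the drift inequality yields \eqref{equ:LyapunovIneq}. The principal obstacle is really just the per-slot bound on $(Q^d_i[l])^2$; once the channel-gain normalization is made explicit (or, failing that, $B$ is enlarged to include $\mathbb{E}[(\lambda_{max}^{\mathbf{W}_i})^2]$), everything else is linearity of expectation and rearrangement.
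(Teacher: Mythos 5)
Your proof follows essentially the same route as the paper's: square the queue recursion $Z_i[l+1]=\max\{Z_i[l]+Q^d_i[l],0\}$, use $(\max\{a,0\})^2\le a^2$, take conditional expectations, and bound the residual $\frac{1}{2}\sum_i\mathbb{E}[(Q^d_i[l])^2\mid\mathbf{Z}[l]]$ by $B=\frac{K}{2}P_{peak}^2$. The only difference is that you make explicit the normalization needed for the last step ($\lambda_{max}^{\mathbf{W}_i}\le 1$ and $P_i\le P_{peak}$), which the paper simply asserts via ``the received signal at receiver $i$ has a maximum power of $P_{peak}$''; your version is, if anything, the more careful one.
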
 
\begin{proof}\renewcommand{\qedsymbol}{}
See Appendix \ref{app:Lemma2}. 
\end{proof}
Using Lyapunov optimization theorem, it is shown in \cite{Neely2010} that a policy which minimizes the Right Hand Side (RHS) of \eqref{equ:LyapunovIneq} at each timeslot has the claimed properties (a) and (b) in Theorem  \ref{theo:TranMin}. Hence in order to prove the theorem we only need to show that Algorithm  \ref{alg:TranMin}  minimizes the RHS of  \eqref{equ:LyapunovIneq} over all possible policies.
To that end, in each timeslot $l$, the E-AP observes the queue backlogs and chooses $\mathbf{x}[l]$ equal to the solution of the following optimization problem:
\begin{mini!}|l|
 {\mathbf{x}[l] }{ V{Q_{AP}}[l]+\sum_{i=1}^K Z_i[l]Q^d_i[l]}{\label{equ:MDPPMinTran}}{}
  \addConstraint{  Tr(\mathbf{x}[l]\mathbf{x}^\ast[l])\le P_{peak}}. 
 \end{mini!}

Substituting $Q^d_i, \ i = 0,\ldots,K$ in equation \eqref{equ:MDPPMinTran} and neglecting the constant terms, we can rewrite the optimization problem as,
\begin{maxi!}|l|
 {\mathbf{x}[l] }{Tr(\mathbf{W}'[l]\mathbf{x}[l]\mathbf{x}^\ast[l])}{\label{equ:MDPPMinTran2}}{}
  \addConstraint{  Tr(\mathbf{x}[l]\mathbf{x}^\ast[l])\le P_{peak}}. 
 \end{maxi!}
where $\mathbf{W'}[l] \triangleq \sum_{i=1}^K Z_i[l]\mathbf{W}_i[l]-V\mathbf{I}_N$. Now, using Lemma \ref{Lemma:baseLemma}, the solution of the optimization problem \eqref{equ:MDPPMinTran} is obtained as follows: 

 
\begin{align}
\mathbf{x}[l] =  \left\{
\begin{array}{cc}
P_{peak}\mathbf{u}_{max}^{\mathbf{W'}}[l],\ \   &\lambda_{max}^{\mathbf{W'}}[l]\ge 0, \\
0, & otherwise,
\end{array} \right.
\end{align}
This is exactly the same as the policy presented in Algorithm 1. Therefore, this algorithm minimizes the RHS of  \eqref{equ:LyapunovIneq}, and hence, according to Lyapunov optimization theorem \cite{Neely2010} satisfies parts (a) and (b) in Theorem \ref{theo:TranMin}.


}
\end{proof}

\section{Maximizing Average Received Energy in power-limited Case}
In this section, we consider an E-AP that is connected to a stable power source, and multiple E-Rs that receive energy from this E-AP\footnote{We borrow our notation from the previous section in the current and following sections. The references of the notations are clear from the context of each section.}. A real-world application of this scenario is  wireless charging of  battery-powered devices in smart homes. Since wireless  chargers are plugged into power outlets, there is no limitation on their available energy, but the input power is limited. An optimal transmitting policy of the E-AP aims at maximixing the  power transmission efficiency by maximizing the total received power of the E-Rs. 
 Consequently, the optimal policy can be formulates as a solution of the following  problem:
\begin{maxi!}|l|
 {\{\mathbf{x}(\mathbf{H})\}}{\lim_{L \rightarrow \infty}\frac{1}{L}\sum_{l=0}^{L-1}\sum_{i=1}^K\mathbb{E}[Tr(\mathbf{W}_i[l]\mathbf{x}[l]\mathbf{x}^\ast[l])]}{\label{P2}}{}\label{P2Obj}
  \addConstraint{ \lim_{L \rightarrow \infty}\frac{1}{L}\sum_{l=0}^{L-1}\mathbb{E}[Tr(\mathbf{x}[l]\mathbf{x}^\ast[l])]\le P_{avg}} \label{equ:MAXRECVPTRANAVG}
\addConstraint{{Tr(\mathbf{x}[l]\mathbf{x}^\ast[l])}\le P_{peak}, \quad \forall l \ge 0 }, \label{equ:P2peak}
 \end{maxi!}
where constraint \eqref{equ:MAXRECVPTRANAVG} guarantees that the  average transmission power does not exceed $P_{avg}$, and  constraint \eqref{equ:P2peak} is the physical   limitation on the instantaneous transmit power of the E-AP. Solving Problem \eqref{P2} involves challenges similar to Problem \eqref{P1}, i.e., the problem is non-convex due to constraint \eqref{equ:MAXRECVPTRANAVG} and there is no closed form expression for terms with time-averaged expectations. 
In a similar vein as the energy-limited case, in the sequel, we first assume that the channel statistics are available and obtain the optimal solution to problem \eqref{P2} to find the optimal policy. Then, based on MDPP, we propose a near-optimal transmission policy that does not require the channel statistics, and  derive the optimality gap of its performance as well.


\subsection{Optimal Policy}
%

The following theorem derives an optimal solution for Problem \eqref{P2}. 
\begin{theorem}\label{theo:MaxRecv}
The following transmission policy maximizes \eqref{P2Obj} and satisfies constraints \eqref{equ:MAXRECVPTRANAVG} and \eqref{equ:P2peak}.
At each timeslot the E-AP  estimates the channel  and determines the beamforming vector as:
\begin{align}
\mathbf{x}[l] =  \left\{
\begin{array}{cc}
P_{peak}\mathbf{u}_{max}^{\mathbf{W'}}[l],\ \   &\lambda_{max}^{\mathbf{W'}}[l]\ge \lambda_{Th}^{\mathbf{W'}}, \\
0, & otherwise,
\end{array} \right.
\label{equ:MDPPMaxRecv}
\end{align}
where $\mathbf{u}_{max}^{\mathbf{W'}}$ is the eigenvector of matrix $\mathbf{W'}[l] \triangleq \sum_{i=1}^K \mathbf{W}_i[l]$ associated with the largest eigenvalue ($\lambda_{max}^{\mathbf{W'}}[l]$) and
\begin{equation}
\lambda_{Th}^{\mathbf{W'}}=F^{-1}_{\lambda_{max}^{\mathbf{W'}}}(1-\frac{P_{avg}}{P_{peak}}), \label{equ:lambaThTheorem3}
\end{equation} 
where $F^{-1}_{\lambda_{max}^{\mathbf{W'}}}$ is the inverse cumulative distribution function of $\lambda_{max}^{\mathbf{W'}}$.  
\end{theorem}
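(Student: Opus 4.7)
My plan is to reduce Problem \eqref{P2} to a scalar power-allocation problem over the random variable $\lambda_{max}^{\mathbf{W'}}[l]$, and then solve that reduced problem by a bang-bang (threshold) argument in the spirit of the proof of Theorem \ref{theo:MinTran}.

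First I would fix a timeslot $l$ and decouple the beamforming \emph{direction} from the beamforming \emph{power}. Writing $p[l] \triangleq \mathrm{Tr}(\mathbf{x}[l]\mathbf{x}^\ast[l]) = \|\mathbf{x}[l]\|^2$, the objective can be rewritten using linearity of trace as $\sum_i \mathrm{Tr}(\mathbf{W}_i[l]\mathbf{x}[l]\mathbf{x}^\ast[l]) = \mathrm{Tr}(\mathbf{W'}[l]\mathbf{x}[l]\mathbf{x}^\ast[l])$ with $\mathbf{W'}[l] = \sum_{i=1}^K \mathbf{W}_i[l]$. Since each $\mathbf{W}_i[l] = \mathbf{H}_i^\ast[l]\mathbf{H}_i[l]$ is Hermitian (in fact positive semidefinite), so is $\mathbf{W'}[l]$, and Lemma \ref{Lemma:baseLemma} immediately gives, for any admissible $\mathbf{x}[l]$,
\begin{equation*}
\mathrm{Tr}(\mathbf{W'}[l]\mathbf{x}[l]\mathbf{x}^\ast[l]) \le p[l]\,\lambda_{max}^{\mathbf{W'}}[l],
\end{equation*}
with equality attained by choosing $\mathbf{x}[l] = \sqrt{p[l]}\,\mathbf{u}_{max}^{\mathbf{W'}}[l]$. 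So, without loss of optimality, any candidate policy may be taken to beamform along $\mathbf{u}_{max}^{\mathbf{W'}}[l]$, and the remaining freedom is the scalar choice $p[l] \in [0, P_{peak}]$ as a (measurable) function of the current channel realization.

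Second, I would recast \eqref{P2} as the scalar stochastic program
\begin{equation*}
\max_{p(\cdot)} \; \mathbb{E}\!\left[p(\lambda)\,\lambda\right] \quad \text{s.t.} \quad \mathbb{E}[p(\lambda)] \le P_{avg}, \;\; 0 \le p(\lambda) \le P_{peak},
\end{equation*}
where $\lambda \triangleq \lambda_{max}^{\mathbf{W'}}$, whose distribution is stationary across slots by the i.i.d.\ assumption on $\mathbf{H}_i[l]$. This is a classic linear problem over a probability-weighted interval, and a standard Lagrangian/KKT argument shows that the optimal $p(\lambda)$ is bang-bang: for a Lagrange multiplier $\mu \ge 0$, the integrand $p(\lambda)(\lambda-\mu)$ is maximized pointwise by $p = P_{peak}$ when $\lambda > \mu$ and $p = 0$ when $\lambda < \mu$. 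Identifying $\mu$ with the threshold $\lambda_{Th}^{\mathbf{W'}}$ yields the policy \eqref{equ:MDPPMaxRecv}.

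Third, I would pin down the threshold via the average-power constraint. Under the threshold policy, $\mathbb{E}[p(\lambda)] = P_{peak}\Pr(\lambda \ge \lambda_{Th}^{\mathbf{W'}}) = P_{peak}\bigl(1 - F_{\lambda_{max}^{\mathbf{W'}}}(\lambda_{Th}^{\mathbf{W'}})\bigr)$. Setting this equal to $P_{avg}$ and solving gives exactly \eqref{equ:lambaThTheorem3}. Feasibility of \eqref{equ:MAXRECVPTRANAVG} follows by construction, and feasibility of \eqref{equ:P2peak} is immediate because $\|\mathbf{x}[l]\|^2 \in \{0,P_{peak}\}$. Finally I would appeal to the ergodic theorem (or Birkhoff's theorem, given i.i.d.\ channels) to replace $\lim_{L\to\infty}\tfrac{1}{L}\sum_{l} \mathbb{E}[\,\cdot\,]$ by the stationary expectation $\mathbb{E}[\,\cdot\,]$, certifying that the proposed policy attains the supremum of \eqref{P2Obj}.

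The main obstacle is the degenerate case in which $\Pr(\lambda = \lambda_{Th}^{\mathbf{W'}}) > 0$, i.e., the distribution of $\lambda_{max}^{\mathbf{W'}}$ has an atom at the threshold so that the equation in \eqref{equ:lambaThTheorem3} has no exact solution in the pure bang-bang family. I would handle this by admitting randomization (or equivalently, a fractional power $p \in (0,P_{peak})$) exactly at the threshold value to hit the average-power constraint with equality; the resulting policy remains optimal because the KKT conditions still hold, and the deterministic threshold form \eqref{equ:MDPPMaxRecv} is optimal whenever $F_{\lambda_{max}^{\mathbf{W'}}}$ is continuous, which is the standard situation under Rician fading.
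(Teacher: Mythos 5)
Your proposal is correct and follows essentially the same route as the paper's proof in Appendix~\ref{app:OptRecvMax}: reduce to a single-slot stochastic problem, use Lemma~\ref{Lemma:baseLemma} to fix the beam direction as $\mathbf{u}_{max}^{\mathbf{W'}}$, and calibrate the threshold so that the average-power constraint is met with equality. The only cosmetic difference is that you certify optimality of the bang-bang power rule via a Lagrangian/KKT argument, whereas the paper gives the equivalent direct comparison showing that any policy achieving strictly larger received power must violate $\mathbb{E}[P_x]\le P_{avg}$; your handling of a possible atom of $\lambda_{max}^{\mathbf{W'}}$ at the threshold is a detail the paper omits.
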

\begin{proof}\renewcommand{\qedsymbol}{}
See Appendix \ref{app:OptRecvMax}
\end{proof}

Note that the transmission policy introduced in \eqref{equ:MDPPMaxRecv} concentrates the transmission beam toward a virtual E-R with a channel matrix equal to the sum of all channel matrices. Under this policy the beam is always biased toward the E-Rs with better channel conditions. Moreover, to calculate the optimal threshold in  \eqref{equ:lambaThTheorem3}, the E-AP  needs to know the  distribution of the largest eigenvalue of the  sum of channel matrices, which may not be available in general. This issue makes the optimal policy impractical in many applications. Nevertheless, still he optimal solution can serve as an upper-bound for the performance of any other policy, and also  sheds a light on  the structure of a proper transmission strategy. 
\subsection{ Transmission Policy for Power-Limited Case}
In this subsection, we propose a transmission policy  for power-limited case, based on MDPP technique. As discussed before, this technique only needs the instantaneous CSI and adapts to variation in channel distribution. This policy is introduced in Algorithm \ref{alg:RecvMax}. The virtual queue $Z_1$ in this algorithm captures the deviation of the average transmitted power from $P_{avg}$. 
The beamforming vector is  determined in steps    \ref{alg:s1}  to  \ref{alg:s2}  of  Algorithm  \ref{alg:RecvMax}.
Similar to the optimal solution in Theorem \ref{theo:MaxRecv}, the beamforming vector in Algorithm  \ref{alg:RecvMax} is determined by $\lambda_{max}^{\mathbf{W}'}$, which is the eigenvector of the sum channel 
matrix $\mathbf{W}'$ associated with the  largest eigenvalue. The E-AP updates $Z_1$ at the end of each timeslot.
The following theorem states that under the proposed policy, the  time averaged expected total received power, $\bar{Q}_{PL}^{MDPP}$, is within a bounded distance of the one under optimal policy, $\bar{Q}_{PL}^{Opt}$. 

\begin{theorem}\label{theo:RecvMax}
The E-AP transmission policy given in Algorithm \ref{alg:RecvMax}:
\begin{enumerate}[label=(\alph*)]
\item
Satisfies the constraint  \eqref{equ:MAXRECVPTRANAVG}.
\item
Yields an average received power within a maximum distance of $\frac{B}{V}$ from the optimal solution, i.e.,  $\bar{Q}_{PL}^{Opt} \le \bar{Q}_{PL}^{MDPP}\le \bar{Q}_{PL}^{opt}+\frac{B}{V}$, where $B = \frac{1}{2}P_{peak}^2$ and $V$ is a control parameter of MDPP algorithm. 
\end{enumerate}
\end{theorem}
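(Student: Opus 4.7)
The plan is to mirror the structure of the proof of Theorem \ref{theo:TranMin}: set up a Lyapunov function on the single virtual queue $Z_1$, bound the drift-plus-penalty (here, drift-minus-reward, since we are maximizing), show that Algorithm \ref{alg:RecvMax} is precisely the per-slot minimizer of that bound, and then invoke the standard Lyapunov optimization theorem from \cite{Neely2010} to obtain the two claims.

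First I would define the quadratic Lyapunov function $\mathbb{L}(Z_1[l]) \triangleq \tfrac{1}{2} Z_1^2[l]$ and the one-step conditional drift $\Delta(Z_1[l]) \triangleq \mathbb{E}[\mathbb{L}(Z_1[l+1]) - \mathbb{L}(Z_1[l]) \mid Z_1[l]]$. Because the objective in \eqref{P2} is a maximization, I would form the \emph{drift-minus-reward} expression $\Delta(Z_1[l]) - V\,\mathbb{E}\bigl[\sum_{i=1}^{K} Tr(\mathbf{W}_i[l]\mathbf{x}[l]\mathbf{x}^\ast[l]) \mid Z_1[l]\bigr]$. Using the queue update $Z_1[l+1] = \max\{Z_1[l] + Tr(\mathbf{x}[l]\mathbf{x}^\ast[l]) - P_{avg},\, 0\}$ and the elementary bound $(\max\{a,0\})^2 \le a^2$, I would obtain an upper bound paralleling Lemma \ref{Lem:DriftMinTran}, namely
\begin{equation*}
\Delta(Z_1[l]) - V\,\mathbb{E}\!\left[\sum_{i=1}^K Tr(\mathbf{W}_i[l]\mathbf{x}[l]\mathbf{x}^\ast[l]) \,\Big|\, Z_1[l]\right] \le B + Z_1[l]\,\mathbb{E}[Tr(\mathbf{x}[l]\mathbf{x}^\ast[l]) - P_{avg} \mid Z_1[l]] - V\,\mathbb{E}\!\left[\sum_{i=1}^K Tr(\mathbf{W}_i[l]\mathbf{x}[l]\mathbf{x}^\ast[l]) \,\Big|\, Z_1[l]\right],
\end{equation*}
with $B = \tfrac{1}{2}P_{peak}^2$, which uses only the peak-power constraint \eqref{equ:P2peak}.

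Next I would argue that Algorithm \ref{alg:RecvMax} minimizes the right-hand side above at each slot over all feasible $\mathbf{x}[l]$. Dropping terms independent of $\mathbf{x}[l]$, the per-slot problem becomes $\max_{\mathbf{x}[l]} \, Tr(\mathbf{W}'[l]\mathbf{x}[l]\mathbf{x}^\ast[l])$ subject to $Tr(\mathbf{x}[l]\mathbf{x}^\ast[l]) \le P_{peak}$, where $\mathbf{W}'[l] \triangleq V\sum_{i=1}^K \mathbf{W}_i[l] - Z_1[l]\mathbf{I}_N$. Applying Lemma \ref{Lemma:baseLemma}, the optimizer is $\mathbf{x}[l] = P_{peak}\,\mathbf{u}_{max}^{\mathbf{W}'}[l]$ when $\lambda_{max}^{\mathbf{W}'}[l] \ge 0$ and $\mathbf{x}[l]=\mathbf{0}$ otherwise, which is exactly the beamforming rule in steps \ref{alg:s1}--\ref{alg:s2}.

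Finally I would invoke the Lyapunov optimization theorem of \cite{Neely2010}: since the algorithm minimizes the drift-minus-reward bound every slot, and since problem \eqref{P2} admits a feasible stationary randomized policy (e.g.\ the optimal policy from Theorem \ref{theo:MaxRecv}) achieving the optimal objective $\bar{Q}_{PL}^{Opt}$ with $\mathbb{E}[Tr(\mathbf{x}^\ast \mathbf{x})]\le P_{avg}$, mean-rate stability of $Z_1$ follows, which gives part (a) via the standard telescoping/queue-stability argument, and telescoping the drift-minus-reward inequality and dividing by $L$ yields $\bar{Q}_{PL}^{MDPP} \ge \bar{Q}_{PL}^{Opt} - B/V$, giving part (b). The main obstacle, and the step requiring the most care, is verifying the per-slot equivalence between Algorithm \ref{alg:RecvMax} and the minimizer of the drift-minus-reward bound, since the sign of $Z_1[l]$ enters with opposite sign relative to the energy-limited case and shifts the effective matrix $\mathbf{W}'[l]$; everything else is a direct transcription of the argument used in Theorem \ref{theo:TranMin}.
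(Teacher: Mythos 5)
Your proposal is correct and is exactly the argument the paper intends: the paper omits this proof, stating only that it is ``similar to the proof of Theorem \ref{theo:TranMin},'' and your adaptation --- quadratic Lyapunov function on $Z_1$, drift-minus-reward bound with $B=\tfrac{1}{2}P_{peak}^2$, per-slot maximization of $Tr(\mathbf{W}'[l]\mathbf{x}\mathbf{x}^\ast)$ via Lemma \ref{Lemma:baseLemma} matching steps \ref{alg:s1}--\ref{alg:s2}, and the standard Lyapunov optimization theorem from \cite{Neely2010} --- is that argument, including the correct handling of the sign flip on $Z_1$ in $\mathbf{W}'[l]$. No gaps.
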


\begin{algorithm}[t]
\caption{E-AP algorithm in power-limited case}\label{alg:RecvMax}
\begin{algorithmic}[1]
\State \textbf{Initialization:} $l\gets 0, Z_1[0] \gets 0$.
\While{(true)}
\State Estimate $\mathbf{H_i},\;\forall i = 1,2,\ldots,K $.
\State $\mathbf{W}_i[l] \gets \mathbf{H}^\ast_i[l]\mathbf{H}_i[l],\;\forall i = 1,2,\ldots,K$.
\State $\mathbf{W'}[l] \gets V\sum_{i=1}^K \mathbf{W}_i[l]-Z_1[l]\mathbf{I}$. \label{alg:s1}
\If{$\lambda_{max}^{{\mathbf{W}'}}[l]>0$}
\State $\mathbf{x}[l]\gets P_{peak}\mathbf{u}_{max}^{\mathbf{W'}}[l]$,
\Else
\State $\mathbf{x}[l] \gets 0$.
\EndIf \label{alg:s2}
\State $Z_1[l+1] \gets max\{Z_1[l]+Tr(\mathbf{x}[l]\mathbf{x}^\ast[l])-P_{avg},0\}$.
\State $l \gets l+1$.
\EndWhile
\end{algorithmic}
\end{algorithm}

The proof is similar to the proof of Theorem \ref{theo:TranMin}, and is omitted here for brevity.

\section{Considering Fairness in Maximizing Received Energy }
The proposed transmission policy in Algorithm \ref{alg:RecvMax} is highly biased in flavor of those E-Rs that are nearer to the E-AP. This is because the nearer E-Rs receive more energy  than the farther E-Rs if the same amount of energy is transmitted toward them. In this section, we aim to ensure fairness in designing transmission policies. For this purpose, we investigate  two techniques for imposing fairness among the E-Rs, namely Max-Min Fairness (MMF) and QoS-aware proportional fairness (QPF). In the following two subsections, we introduce each technique and propose transmission policies for them.

%
%

\subsection{Max-Min Fairness}
A common technique for achieving fairness among E-Rs is maximizing the minimum of the average received powers of different E-Rs. This is known as max-min fairness (MMF) \cite{Marbach2002}, which results in
a balance between the received power of the E-Rs, but with an expense of decreasing the total received power of the E-Rs. Accordingly, the MMF policy can be formulated as the solution of the following problem:
\begin{maxi!}|l|
 {\{\mathbf{x}(\mathbf{H})\}}{\bar{Q}_{min}\triangleq\min_i \lim_{L \rightarrow \infty}\frac{1}{L}\sum_{l=0}^{L-1} \mathbb{E}[Tr(\mathbf{W}_i[l]\mathbf{x}[l]\mathbf{x}^\ast[l])]}{\label{P3}}{} \label{equ:P3Obj}
  \addConstraint{ \lim_{L \rightarrow \infty}\frac{1}{L}\sum_{l=0}^{L-1}\mathbb{E}[Tr(\mathbf{x}[l]\mathbf{x}^\ast[l])]\le P_{avg}} \label{equ:PavgConstMinMaxFair}
\addConstraint{{Tr(\mathbf{x}[l]\mathbf{x}^\ast[l])}\le P_{peak}, \quad \forall l  \ge 0 }.
 \end{maxi!}
%
%

Same as before,  we avoid struggling  with the  non-convex problem \eqref{P3} by introducing a policy  and analyzing its performance.   Algorithm \ref{alg:RecvMinMaxFairness} describes the proposed MMF policy, wich is based on the MDPP technique for maximizing some concave function of time averages. Same as before, the proposed MMF follows a   two-level structure, and focuses the transmission beam  toward a virtual E-R. In this policy,  the channel matrix of the virtual E-R is a weighted sum  of the channel matrices of all E-Rs, and the  weights are determined by the virtual queues  ${G}_i,\; i=1,\ldots,K$. The backlog of these virtual queues grows faster for the E-Rs which receive less power. As a consequent,  these E-Rs have a greater weight in the weighted sum channel matrix, and  will receive more power in the consequent timeslots. Let  $\bar{Q}_{min}^{MMF}$ and $\bar{Q}_{min}^{Opt}$ denote  the minimum time-averaged received power under the MMF policy and the optimal policy, respectively. The following theorem discusses the optimality of the MMF policy. 
\begin{theorem}\label{theo:MMFPolicy}
 The MMF policy for the E-AP transmission, described by Algorithm \ref{alg:RecvMinMaxFairness}:
\begin{enumerate}[label=(\alph*)]
\item
Satisfies the constraints \eqref{equ:PavgConstMinMaxFair}.
\item
Yields a minimum average received power that is within a maximum distance of $\frac{B}{V}$ from the optimal solution, i.e., $\bar{Q}_{min}^{Opt}-\frac{B}{V}\le \bar{Q}_{min}^{MMF} \le \bar{Q}_{min}^{Opt}$, where $B = \frac{K+1}{2}P_{peak}^2$ and $V$ is a control parameter of the MDPP algorithm. 
\end{enumerate}

\end{theorem}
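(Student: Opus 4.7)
The plan is to mirror the drift-plus-penalty technique of Theorems \ref{theo:TranMin} and \ref{theo:RecvMax}, with one extra ingredient to handle the $\min_i$ in the objective \eqref{equ:P3Obj}. Since $\min_i$ is not separable across slots, I would invoke the standard auxiliary-variable trick of \cite{Neely2010}: introduce a per-slot scalar $\gamma[l]\in[0,\gamma_{max}]$ and recast \eqref{P3} as maximizing $\bar\gamma\triangleq\lim_{L\to\infty}\tfrac{1}{L}\sum_{l=0}^{L-1}\mathbb{E}[\gamma[l]]$ subject to $\bar\gamma\le \bar Q_i$ for every $i$, together with the power constraints already present in \eqref{P3}. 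Here $\gamma_{max}$ can be taken as any finite upper bound on the attainable minimum received power, which exists because the peak-power cap and Lemma \ref{Lemma:baseLemma} give $Q_i[l]\le P_{peak}\lambda_{max}(\mathbf{W}_i[l])$. At an optimum of the reformulation $\bar\gamma=\bar Q_{min}^{Opt}$, so the two problems share their optimal value.

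Next I would associate virtual queues to each time-averaged inequality: $Z_1[l]$ for the average transmit-power constraint (updated as in Algorithm \ref{alg:RecvMax}) and $G_i[l]$ for each $\bar\gamma\le \bar Q_i$, with update
\begin{equation*}
G_i[l+1]=\max\big\{G_i[l]+\gamma[l]-Tr(\mathbf{W}_i[l]\mathbf{x}[l]\mathbf{x}^\ast[l]),\,0\big\}.
\end{equation*}
Letting $\boldsymbol{\Theta}[l]\triangleq(Z_1[l],G_1[l],\dots,G_K[l])$ and $\mathbb{L}(\boldsymbol{\Theta}[l])\triangleq\tfrac{1}{2}Z_1^2[l]+\tfrac{1}{2}\sum_{i=1}^K G_i^2[l]$, and taking the penalty to be $-V\gamma[l]$ (since $\bar\gamma$ is being maximized), a squaring argument identical to Lemma \ref{Lem:DriftMinTran} yields the upper bound
\begin{equation*}
\Delta(\boldsymbol{\Theta}[l])-V\mathbb{E}[\gamma[l]\mid\boldsymbol{\Theta}[l]]\le B-V\mathbb{E}[\gamma[l]\mid\boldsymbol{\Theta}[l]]+Z_1[l]\mathbb{E}[Tr(\mathbf{x}\mathbf{x}^\ast)-P_{avg}\mid\boldsymbol{\Theta}[l]]+\sum_{i=1}^K G_i[l]\mathbb{E}[\gamma[l]-Tr(\mathbf{W}_i\mathbf{x}\mathbf{x}^\ast)\mid\boldsymbol{\Theta}[l]],
\end{equation*}
with $B=\tfrac{K+1}{2}P_{peak}^2$ absorbing the $K+1$ per-slot squared increments, each bounded by $P_{peak}^2$.

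The MDPP principle then prescribes choosing $(\gamma[l],\mathbf{x}[l])$ each slot to minimize this bound, and the minimization decouples. The $\gamma[l]$ subproblem is a scalar linear program on $[0,\gamma_{max}]$ with slope $\sum_i G_i[l]-V$, so it picks $\gamma[l]=\gamma_{max}$ when the slope is negative and $\gamma[l]=0$ otherwise. The $\mathbf{x}[l]$ subproblem, after dropping constants, reduces to maximizing $Tr(\mathbf{W}'[l]\mathbf{x}\mathbf{x}^\ast)$ subject to $Tr(\mathbf{x}\mathbf{x}^\ast)\le P_{peak}$, where $\mathbf{W}'[l]\triangleq\sum_{i=1}^K G_i[l]\mathbf{W}_i[l]-Z_1[l]\mathbf{I}_N$; by Lemma \ref{Lemma:baseLemma} the maximizer is the two-level eigenvector policy $\mathbf{x}[l]=P_{peak}\mathbf{u}_{max}^{\mathbf{W}'}[l]$ when $\lambda_{max}^{\mathbf{W}'}[l]\ge 0$ and $\mathbf{x}[l]=\mathbf{0}$ otherwise, which is precisely Algorithm \ref{alg:RecvMinMaxFairness}. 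Invoking the Lyapunov optimization theorem of \cite{Neely2010} then delivers mean-rate-stability of all virtual queues---implying the average-power constraint \eqref{equ:PavgConstMinMaxFair} and $\bar\gamma\le \bar Q_i^{MMF}$ for every $i$, hence (a)---and an $O(1/V)$ gap to the optimum of the reformulated problem; combining these gives $\bar Q_{min}^{MMF}\ge\bar\gamma\ge \bar Q_{min}^{Opt}-B/V$, while $\bar Q_{min}^{MMF}\le \bar Q_{min}^{Opt}$ is automatic, yielding (b).

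The main obstacle I anticipate is the auxiliary-variable step itself: rigorously justifying equivalence between \eqref{P3} and the reformulation, and pinning down a concrete policy-independent $\gamma_{max}$ so that the standard MDPP guarantees for concave-utility maximization apply. Once this reduction is established, the remainder of the argument is a direct adaptation of the machinery already deployed for Theorems \ref{theo:TranMin} and \ref{theo:RecvMax}.
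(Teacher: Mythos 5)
Your proposal follows essentially the same route as the paper's proof: Neely's auxiliary-variable reformulation followed by drift-plus-penalty minimization, yielding exactly Algorithm \ref{alg:RecvMinMaxFairness} (your single scalar $\gamma[l]$ and the paper's vector $\bm{\gamma}[l]$ with objective $\overline{\phi(\bm{\gamma})}$, $\phi=\min_i$, collapse to the same per-slot rule comparing $V$ with $\sum_i G_i[l]$). The one step you flag as the ``main obstacle''---rigorous equivalence of the reformulated and original problems---is precisely the part the paper writes out in detail, via feasibility of the constant choice $\bm{\gamma}^{opt}=(\bar{Q}^{opt}_1,\ldots,\bar{Q}^{opt}_K)$ together with Jensen's inequality and the concavity and entrywise monotonicity of $\phi$, so your outline contains no genuine gap beyond what the paper itself supplies.
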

\begin{algorithm}[t]
\caption{E-AP algorithm in power-limited case considering Max-Min fairness}\label{alg:RecvMinMaxFairness}
\begin{algorithmic}[1]
\State \textbf{Initialization:} $l\gets 0, Z_1[0] \gets 0, G_i[0] \gets 0,\;\forall i = 1,2,\ldots,K $.
\While{(true)}
\State Estimate $\mathbf{H}_i,\; \forall i = 1,2,\ldots,K$.
\State $\mathbf{W}_i[l] \gets \mathbf{H}^\ast_i[l]\mathbf{H}_i[l],\;\forall i = 1,2,\ldots,K$.
\If{$V>\sum_{i=1}^K G_i[l]$}
\State $\gamma_i[l]\gets P_{peak},\;\forall i = 1,2,\ldots,K$,
\Else
\State $\gamma_i[l]\gets 0,\;\forall i = 1,2,\ldots,K$.
\EndIf
\State $\mathbf{W'}[l] \gets \sum_{i=1}^K {G}_i[l]\mathbf{W}_i[l]-Z_1[l]\mathbf{I}$.
\If{$\lambda_{max}^{{\mathbf{W}'}}[l]>0$}
\State $\mathbf{x}[l] \gets P_{peak}\mathbf{u}_{max}^{{\mathbf{W}'}}$,
\Else
\State $\mathbf{x}[l] \gets 0$.
\EndIf
\State $Z_1[l+1] \gets \max\{Z_1[l]+Tr(\mathbf{x}[l]\mathbf{x}^\ast[l])-P_{avg},0\}$.
\State $G_i[l+1]\gets \max \{G_i[l]+\gamma_i[l]-Tr(W_i[l]\mathbf{x}[l]\mathbf{x}^\ast[l])\},\ \forall  i = 1,2,...,K$.
\State $l\gets l+1$.
\EndWhile
\end{algorithmic}
\end{algorithm}
\begin{proof}
The analysis in Theorem \ref{theo:TranMin} is not directly applicable here, since the objective function in \eqref{equ:P3Obj} is a function of a time-average, rather than a time-averaged quantity.
To prove the above theorem, using a similar approach as in \cite{Neely2010}, we introduce the auxiliary variables $\bm{\gamma}[l]=(\gamma_1[l],...,\gamma_K[l])$ and define the a modified optimisaiton problem as follows:
\begin{maxi!}|l|
 {\{\bm{x}(\mathbf{H}),\bm{\gamma}(\mathbf{H})\}}{\overline{\phi(\bm{\gamma})}}{\label{P3'}}{} \label{equ:objectiveModif}
  \addConstraint{ \lim_{L \rightarrow \infty}\frac{1}{L}\sum_{l=0}^{L-1}\mathbb{E}[Tr(\mathbf{x}[l]\mathbf{x}^\ast[l])]\le P_{avg}} \label{equ:ModifiedProblemPavg} 
\addConstraint{\bar{\gamma}_i\le \bar{Q}_i,\ \forall k \in \{1,...,K\}} \label{equ:ModifiedProblemConst} 
\addConstraint{{Tr(\mathbf{x}[l]\mathbf{x}^\ast[l])}\le P_{peak}, \forall l \ge 0 }, \label{equ:ModifiedProblemPpeak} 
 \end{maxi!}
where 
\begin{align*}
\phi(r_1,r_2,...,r_K)&\triangleq\min_{i\in\{1,\ldots,K\}} r_i \\
\overline{\phi(\bm{\gamma})}&\triangleq\lim_{L \rightarrow \infty}\frac{1}{L}\sum_{l=0}^{L-1} \mathbb{E}[ \phi(\gamma_1,\gamma_2,...,\gamma_K)   ],\\ 
 \bar{\gamma}_i&\triangleq\lim_{L \rightarrow \infty}\frac{1}{L}\sum_{l=0}^{L-1}\mathbb{E}[\gamma_i[l]], \\ 
\bar{Q}_i &\triangleq \lim_{L \rightarrow \infty}\frac{1}{L}\sum_{l=0}^{L-1}\mathbb{E}[Tr(\mathbf{W}_i[l]\mathbf{S}[l])].
\end{align*}

Now, we  first prove that the optimal transmission policy of the modified problem is also the optimal policy for the original problem. Then, using this result and noting that the modified problem is in the form of the problems we encountered before, the intended properties (a) and (b) can be proved using a similar approach to the proof of Theorem \ref{theo:TranMin}. The proof of the later is omitted for brevity, to prove the former, we first show that the optimal transmission policy  of the original problem, $\bm{x}^{opt}(\bm{H}),$ is a feasible policy for the modified problem. This can be achieved by choosing $\bm{\gamma}[l] = \bm{\gamma}^{opt}=({\bar{Q}^{opt}_1},...,\bar{Q}^{opt}_K),\ \forall l$, where $\bar{Q}^{opt}_i$ is the time averaged received power to receiver $i$ under $\bm{x}^{opt}(\bm{H})$. One can verify that $ (\bm{x}^{opt}(\bm{H}),\bm{\gamma}^{opt})$ satisfies  the constraints \eqref{equ:ModifiedProblemPavg}-\eqref{equ:ModifiedProblemPpeak}. Furthermore, by this choice of the arguments, the value of the objective function  \eqref{equ:objectiveModif}  equals $\phi_{opt}\triangleq\phi({\bar{Q}^{opt}_1},...,\bar{Q}^{opt}_m)$, which is the maximum value of the objective function in Problem \eqref{P3}. Therefore, we have $\overline{\phi(\bm{\gamma}^\ast)}\ge \phi_{opt}$, where $\bm{\gamma}^\ast$ is the maximizer of of Problem \eqref{P3'}.

%
It is straightforward to verify that the $\phi: \mathbb{R}^K\to \mathbb{R}$ is continuous, concave and entrywise non-decreasing. Hence, we can write
\begin{equation}
\phi({\bar{Q}^{\ast}_1},...,\bar{Q}^{\ast}_K)\stackrel{a}{\ge}\phi(\bar{\bm{\gamma}}^\ast)\stackrel{b}{\ge} \overline{\phi(\bm{\gamma}^\ast)} \ge \phi_{opt} \label{equ:ModifiedVsOriginal},
\end{equation}
where $\bar{Q}^{\ast}_i$ is the time averaged received power to receiver $i$ under the maximizing solution of  Problem \eqref{P3'}. Inequality $(a)$ follows from equation \eqref{equ:ModifiedProblemConst} and non-decreasing entrywise property of $\phi$ and $(b)$ follows from Jensen's inequality. On the other hand, we have, 
\begin{equation}
 \phi_{opt} \ge \phi({\bar{Q}^{\ast}_1},...,\bar{Q}^{\ast}_K),
\label{equ:intarafi}
\end{equation}
the above inequality holds since all the feasible transmission polices of Problem \eqref{P3'}, also satisfy the constraints of Problem \eqref{P3}. From \eqref{equ:ModifiedVsOriginal} and \eqref{equ:intarafi} we conclude that,
\begin{equation}
 \phi_{opt} = \phi({\bar{Q}^{\ast}_1},...,\bar{Q}^{\ast}_K),
\end{equation}
which is the desired result.
\end{proof}

\subsection{QoS-aware Proportional Fairness}

Maximizing the minimum received power is a strict policy, in the sense that it focuses most of the transmitted power toward farther E-Rs, no matter how far they are from the E-AP. As such, this policy results in drastic degradation in total received power if some of the E-R are very far from the E-AP.  To address this issue in considering fairness, the proportional fairness technique can be used, which  makes a trade-off between maximizing the total received power and the fairness. A proportional fair policy aims to maximize sum of the logarithms of the received power in E-Rs, as the objective utility. Due to the specific structure of a logarithmic function, increasing the received power in the farther E-Rs leads to a greater increase in total utility, but at the same time, increases the power consumption of the E-AP. An optimal policy balances this trade off, and maximizes the total utility. The formal definition of proportional fairness is presented in \cite{Kelly1998}. The QPF policy maximizes the sum of the logarithm of the received powers of the E-Rs, while providing the required QoS of each E-R, i.e., the minimum received power requirement of each E-R.  To find the QPF policy, we  formulate the following optimization problem:
\begin{maxi!}|l|
 {\{\mathbf{x}(\mathbf{H})\}}{\bar{Q}_{LogT}\triangleq \sum_{i=1}^K \log(\bar{Q}_i)}{\label{P4}}{}
  \addConstraint{ \lim_{L \rightarrow \infty}\frac{1}{L}\sum_{l=0}^{L-1}\mathbb{E}[Tr(\bm{W}_i[l]\mathbf{x}[l]\mathbf{x}^\ast[l])]\ge P_{min}} 
\addConstraint{ \lim_{L \rightarrow \infty}\frac{1}{L}\sum_{l=0}^{L-1}\mathbb{E}[Tr(\mathbf{x}[l]\mathbf{x}^\ast[l])]\le P_{avg}} \label{equ:PavgConstPropFair}
\addConstraint{{Tr(\mathbf{x}[l]\mathbf{x}^\ast[l])}\le P_{peak}, \forall l \ge 0 }\ .
 \end{maxi!}

Algorithm \ref{alg:RecvPropFairness} solves the formulated problem and describes the proposed QPF policy. The performance of this algorithm is analyzed in the following theorem.
%
%
\begin{theorem}
Algorithm \ref{alg:RecvPropFairness} describes the QPF policy for the E-AP transmission. This policy 
\begin{itemize}
\item
Satisfies the constraints of \eqref{equ:PavgConstPropFair}.
\item
Yields an objective funtion that is within a maximum distance of $\frac{B}{V}$ from the optimal solution: ($\bar{Q}_{LogT}^{opt}-\frac{B}{V}\le \bar{Q}_{LogT}^{MDPP} \le \bar{Q}_{LogT}^{opt}$), where $B = \frac{2K+1}{2}P_{peak}^2$ and $V$ is a parameter of the MDPP algorithm. 
\end{itemize}
\end{theorem}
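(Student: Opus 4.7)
The plan is to follow the template established by Theorems \ref{theo:TranMin} and \ref{theo:MMFPolicy}, using the Lyapunov Min Drift Plus Penalty technique, but with an additional wrinkle because the objective $\sum_{i=1}^K \log(\bar{Q}_i)$ is a concave function of time-averages rather than itself a time-average, and because of the extra minimum-QoS constraints. First, as in the proof of Theorem \ref{theo:MMFPolicy}, I would introduce auxiliary variables $\gamma_i[l]$ and reformulate the problem as maximizing $\overline{\psi(\bm{\gamma})} \triangleq \lim_{L\to\infty}\frac{1}{L}\sum_{l=0}^{L-1}\mathbb{E}[\sum_{i=1}^K \log \gamma_i[l]]$ subject to the original constraints plus the linking constraints $\bar{\gamma}_i \le \bar{Q}_i$, and (to keep $\log$ bounded) $0\le \gamma_i[l]\le P_{peak}\lambda^{\max}$ for some constant. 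Exactly as in Theorem \ref{theo:MMFPolicy}, continuity, concavity, and componentwise monotonicity of $\psi(r_1,\ldots,r_K)=\sum_i \log r_i$ together with Jensen's inequality imply that the optimal values of the modified and original problems coincide, so it is enough to analyze a policy that is near-optimal for the modified problem.

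Next, I would introduce three families of virtual queues: $Z_1[l]$ tracking the slack in the average-power constraint \eqref{equ:PavgConstPropFair}; $G_i[l]$, $i=1,\ldots,K$, tracking the linking constraint $\bar{\gamma}_i\le \bar{Q}_i$ (updated as $G_i[l+1]=\max\{G_i[l]+\gamma_i[l]-\Tr(\mathbf{W}_i[l]\mathbf{x}[l]\mathbf{x}^\ast[l]),0\}$); and $Y_i[l]$, $i=1,\ldots,K$, tracking the minimum-QoS constraints (updated as $Y_i[l+1]=\max\{Y_i[l]+P_{min}-\Tr(\mathbf{W}_i[l]\mathbf{x}[l]\mathbf{x}^\ast[l]),0\}$). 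With the quadratic Lyapunov function $\mathbb{L}=\tfrac{1}{2}(Z_1^2+\sum_i G_i^2+\sum_i Y_i^2)$, a direct computation analogous to Lemma \ref{Lem:DriftMinTran} (using $Tr(\mathbf{x}\mathbf{x}^\ast)\le P_{peak}$, $\gamma_i\le P_{peak}\lambda^{\max}$, and $Tr(\mathbf{W}_i\mathbf{x}\mathbf{x}^\ast)\le P_{peak}^2$) yields a drift-plus-penalty bound of the form
\begin{equation*}
\Delta(\bm{\Theta}[l])-V\,\mathbb{E}\!\left[\sum_i \log\gamma_i[l]\,\big|\,\bm{\Theta}[l]\right] \le B - V\,\mathbb{E}\!\left[\sum_i \log\gamma_i[l]\,\big|\,\bm{\Theta}[l]\right] + Z_1[l]\mathbb{E}[\cdot] + \sum_i G_i[l]\mathbb{E}[\cdot] + \sum_i Y_i[l]\mathbb{E}[\cdot],
\end{equation*}
where $\bm{\Theta}[l]=(Z_1,G_1,\ldots,G_K,Y_1,\ldots,Y_K)$ and the count of $2K+1$ virtual queues produces exactly $B=\tfrac{2K+1}{2}P_{peak}^2$, matching the theorem's constant.

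Then I would verify that Algorithm \ref{alg:RecvPropFairness} is precisely the greedy minimizer of this upper bound at every slot. The minimization separates into two independent subproblems: choosing $\bm{\gamma}[l]$ to maximize $\sum_i(V\log\gamma_i-G_i[l]\gamma_i)$, which has the closed-form solution $\gamma_i[l]=V/G_i[l]$ (clipped to its feasibility interval); and choosing $\mathbf{x}[l]$ subject to $\Tr(\mathbf{x}\mathbf{x}^\ast)\le P_{peak}$ to maximize $\Tr(\mathbf{W}'[l]\mathbf{x}[l]\mathbf{x}^\ast[l])$ where $\mathbf{W}'[l]=\sum_i(G_i[l]+Y_i[l])\mathbf{W}_i[l]-Z_1[l]\mathbf{I}_N$. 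By Lemma \ref{Lemma:baseLemma}, this beamforming subproblem admits the two-level solution $\mathbf{x}[l]=P_{peak}\mathbf{u}_{max}^{\mathbf{W}'}[l]$ when $\lambda_{max}^{\mathbf{W}'}[l]>0$ and $\mathbf{x}[l]=0$ otherwise, which should match the beamforming steps of Algorithm \ref{alg:RecvPropFairness}.

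Finally, parts (a) and (b) follow directly from the standard Lyapunov optimization theorem of \cite{Neely2010}: the drift-plus-penalty inequality together with an assumed Slater-type feasibility of the constraints gives mean-rate stability of $Z_1,G_i,Y_i$ (hence constraint satisfaction, including the minimum-QoS and average-power constraints), and the objective bound $\bar{Q}_{LogT}^{opt}-\tfrac{B}{V}\le \bar{Q}_{LogT}^{MDPP}\le \bar{Q}_{LogT}^{opt}$. The main obstacle I expect is the handling of the $\log$ objective together with the auxiliary-variable equivalence: one must argue carefully, as in Theorem \ref{theo:MMFPolicy}, that Jensen's inequality plus monotonicity of $\log$ makes the modified problem's optimum an upper bound on the original, while the original's optimal policy (padded with $\gamma_i\equiv \bar{Q}_i^{opt}$) is feasible for the modified problem, so the two optima coincide; only then can the MDPP bound for the modified problem be translated into the claimed bound for the original.
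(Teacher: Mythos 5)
Your proposal is correct and follows exactly the route the paper intends: the paper omits this proof, stating only that it parallels the proof of Theorem \ref{theo:MMFPolicy}, and your reconstruction (auxiliary variables $\gamma_i$ with the Jensen/monotonicity equivalence argument, the $2K+1$ virtual queues matching Algorithm \ref{alg:RecvPropFairness} with $B=\tfrac{2K+1}{2}P_{peak}^2$, the separable greedy minimization giving $\gamma_i=\min\{V/G_i,P_{peak}\}$ and the two-level eigenbeamforming rule) recovers precisely the algorithm's steps. The only minor deviation is your cap $\gamma_i\le P_{peak}\lambda^{\max}$ versus the algorithm's cap at $P_{peak}$, which is consistent with the paper's implicit assumption (used already in Lemma \ref{Lem:DriftMinTran}) that the received power never exceeds $P_{peak}$.
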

\begin{algorithm}[t]
\caption{E-AP algorithm in power-limited case considering QoS-aware Proportional fairness}\label{alg:RecvPropFairness}
\begin{algorithmic}[1]
\State \textbf{Initialization:} $l \gets 0, Z_i[0] \gets 0,\ \forall i = 1,2,...,K+1, G_m[0] \gets 0, \forall m = 1,2,...,K$.
\While{(true)}
\State Estimate $\mathbf{H_i},\;\forall i = 1,2,\ldots,K$.
\State $\mathbf{W}_i[l] \gets \mathbf{H}^\ast_i[l]\mathbf{H}_i[l],\;\forall i = 1,2,\ldots,K$.
\State $\gamma_i[l] \gets \min\{\frac{V}{G_i[l]}, P_{peak}\},\; \forall i = 1,2,\ldots,K$.
\State $\mathbf{W'}[l] \gets \sum_{i=1}^K (Z_i[l]+{G_i}[l])\mathbf{W}_i[l]-Z_{K+1}[l]\mathbf{I}$.
\If{$\lambda_{max}^{{\mathbf{W}'}}[l]>0$}
\State $\mathbf{x}[l] \gets P_{peak}\mathbf{u}_{max}^{{\mathbf{W}'}}$,
\Else
\State $\mathbf{x}[l] \gets 0$.
\EndIf
\State $G_i[l+1]\gets \max \{G_i[l]+\gamma_i[l]-Tr(\mathbf{W}_i[l]\mathbf{x}[l]\mathbf{x}^\ast[l]),0\},\;\forall  i = 1,2,...,K$.
\State $Z_i[l+1] \gets \max\{Z_i[l]+P_{min}-Tr(\mathbf{W}_i[l]\mathbf{x}[l]\mathbf{x}^\ast[l]),0\},\;\forall i = 1,2,...,K$.
\State $Z_{K+1}[l+1] \gets \max\{Z_{K+1}[l]+Tr(\mathbf{x}[l]\mathbf{x}^\ast[l])-P_{avg},0\}$.
\State $l\gets l+1$.
\EndWhile
\end{algorithmic}
\end{algorithm}

The proof is similar to the proof of Theorem \ref{theo:MMFPolicy}, and is omitted here for brevity.

\section{Numerical Results}
In this section, we evaluate the performance of the proposed algorithms with various numerical results. Unless noted otherwise, we consider an E-AP and two E-Rs (each equipped with four receive antennas). The considered network topology is shown in Fig. {\ref{fig:SimSys}}. All the proposed algorithms  are run for $10^5$ timeslots. All the simulation results have been obtained by MATLAB R2015b on a simulation platform with a Windows server 2008, Intel Xeon E5-2650v3 CPU (2.3GHz), and 64GB RAM. 

\begin{figure}
\centering
\resizebox{0.4\textwidth}{!}{
\begin{tikzpicture}[scale=0.6]
   [WD/.style={fill=blue!20,thick},
   HAP/.style={fill=black!20,thick}]

   \node(HAP) at ( 0,0) [minimum size = 5mm,,label=left: E-AP,label=below:$(0\;0)$] {\includegraphics[width=.06\textwidth]{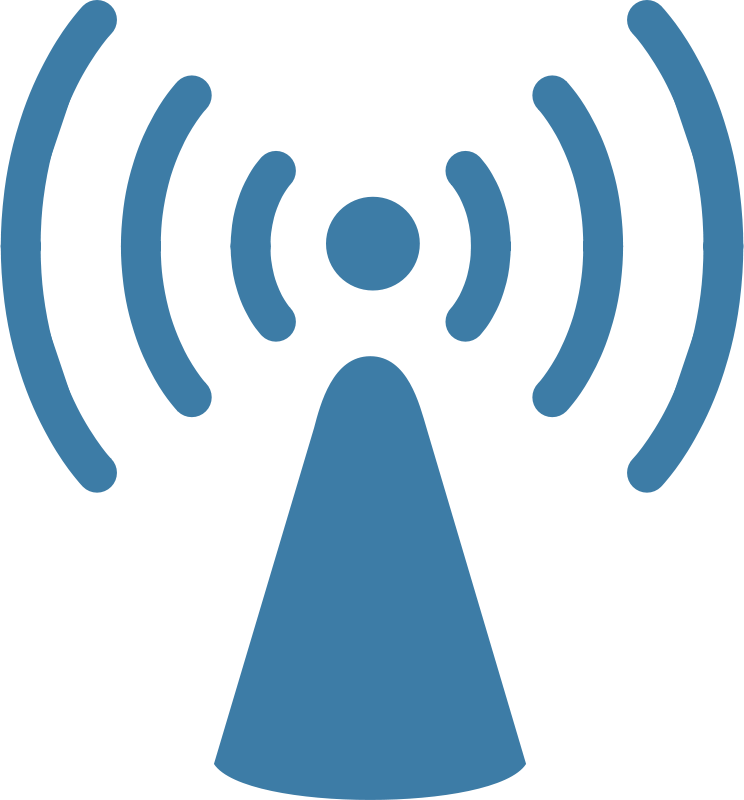}};
   \node(WD1) at ( 3,3) [minimum size = 5mm,,label=above:E-R1,label=right:$(0.3 \;0.3)$ ] {\includegraphics[width=.06\textwidth]{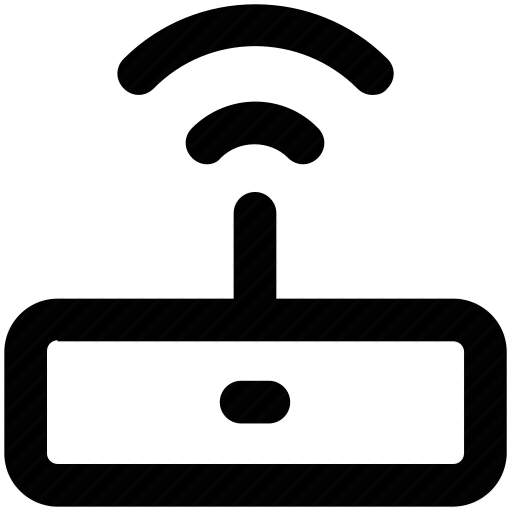}};
   \node(WD2) at (0,7.0711) [minimum size = 12mm,,label=above:E-R2 ,label=left:$(0 \;\; 0.5\sqrt{2})$] {\includegraphics[width=.06\textwidth]{WD.png}};
   \draw [->, line width = 0.6mm, color = red] ([xshift=0.1ex]HAP.north east) to ([xshift=-0.1ex]WD1.south west);
   \draw [->, line width = 0.6mm, color = red] ([yshift=0.1ex]HAP.north) to ([yshift=-0.15ex]WD2.south);
\draw (0,0) circle (4cm);
\draw (0,0) circle (6.5cm);
\end{tikzpicture} }
\caption{The considered network topology }
\label{fig:SimSys}
\end{figure}

\subsection{Energy-Limited Case}
As mentioned before, in the energy-limited case, the E-AP has a limited battery and hence, targets at transferring only the required power level of the E-Rs. 
Fig. \ref{fig:MinTranTwoER} shows the average transmitted power of the E-AP under the proposed policy in Algorithm \ref{alg:TranMin} versus the number of the E-AP's antennas. The maximum transmit power of the E-AP is considered to be 5 W, and three different values of  5, 10, and 15 mW are considered for the required power level of the E-Rs. As can be verified from the figure, under the proposed Algorithm \ref{alg:TranMin} the average received power of each E-R always remains constant and matches the desired power level of the E-Rs. However, as we increase the number of antennas in the transmitter, the transmitted power is more focused on the receivers, and therefore, as can be seen in Fig. \ref{fig:MinTranTwoER}, the required power at the transmitter is decreased.

\begin{figure}
\centering
\includegraphics[height = 0.6\linewidth]{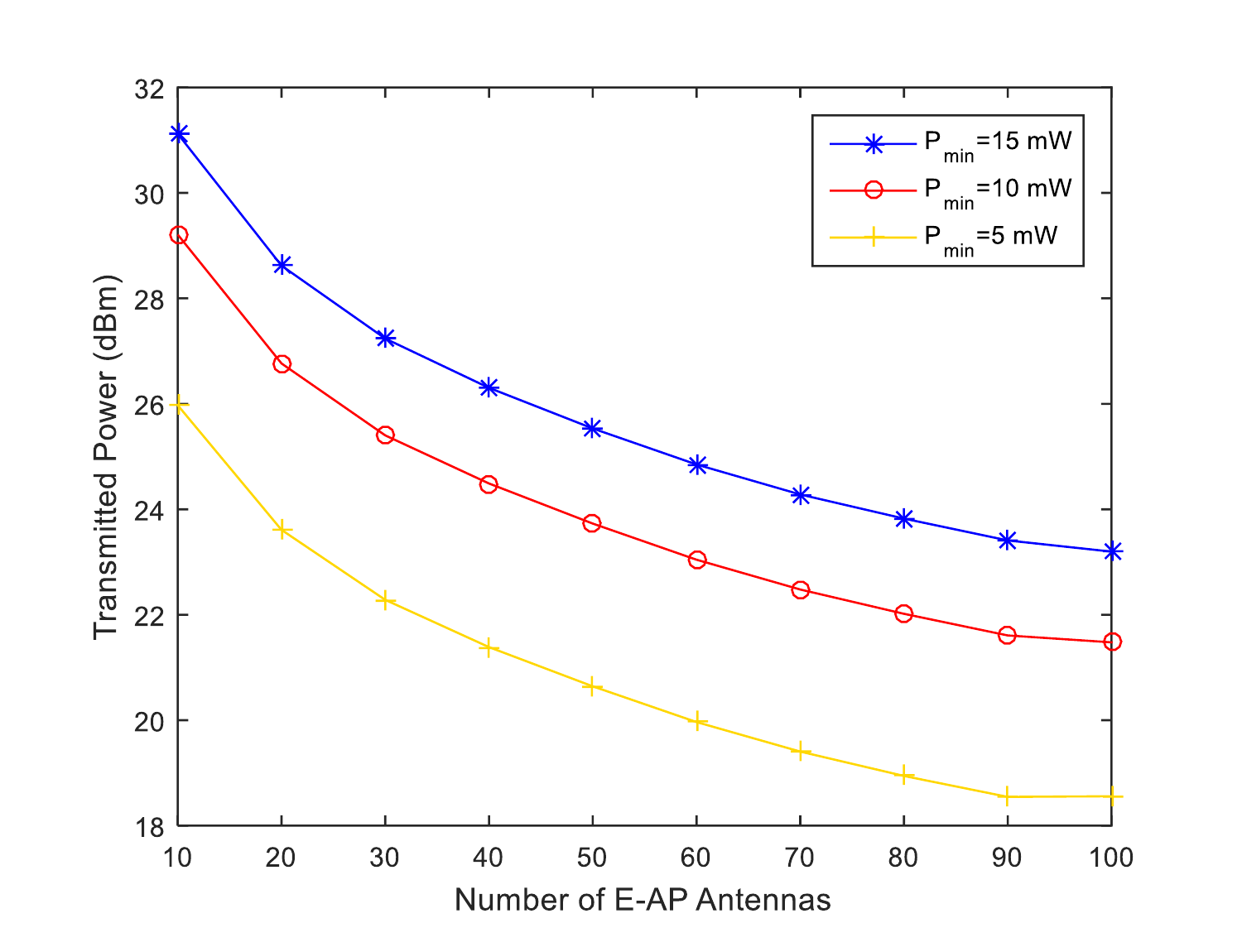}
\caption{Transmitted power of the E-AP versus the number of the E-AP antennas}
\label{fig:MinTranTwoER}
\end{figure}

We also compare the performance of the optimal solution to the MDPP solution. Fig. \ref{fig:MinTranOneER} shows the required transmit power versus the number of transmit antennas when the target received power of each E-R is 15 mW. As can be seen from this figure, the proposed near-optimal algorithm, i.e., Algorithm \ref{alg:TranMin}, performs very well, and its performance is within 5 percent of the optimal solution. While, unlike the optimal solution, Algorithm \ref{alg:TranMin} does not require the statistical information of the distribution of the CSI.

\begin{figure}
\centering
\includegraphics[height = 0.6\linewidth]{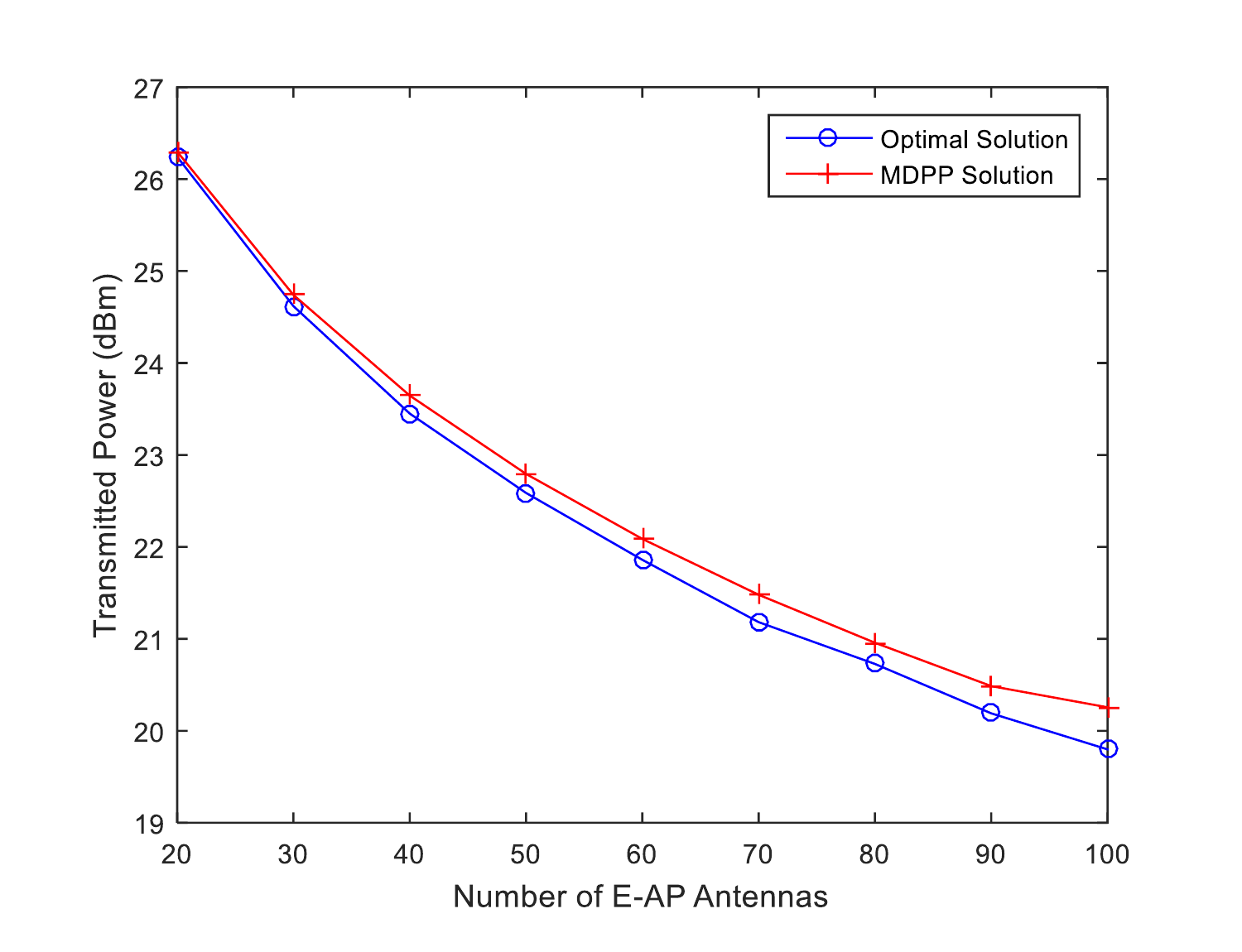}
\caption{Comparison of the near-optimal Algorithm \ref{alg:TranMin} and the optimal solution when we have one E-R}
\label{fig:MinTranOneER}
\end{figure}
\subsection{Power-Limited Case}
As mentioned earlier, in the power-limited case, the E-AP is connected to a stable source of energy and we aim at maximizing the received power of the E-Rs.
Fig. \ref{fig:recPNoMin} shows the average received power at each
E-R versus the number of the E-AP's antennas, and compare the near optimal Algorithm \ref{alg:RecvMax} and the optimal algorithm. Limitations for the E-AP are considered to be 5 and 10 W, respectively. First note that as expected, without considering any fairness models, almost all the transmitted power is delivered to the first E-R, which is closer to the E-AP. Moreover, it can also be seen from this figure that the performance of Algorithm \ref{alg:RecvMax} is still very close to that of the optimal solution.

\begin{figure}
\centering
\includegraphics[height = 0.5\linewidth]{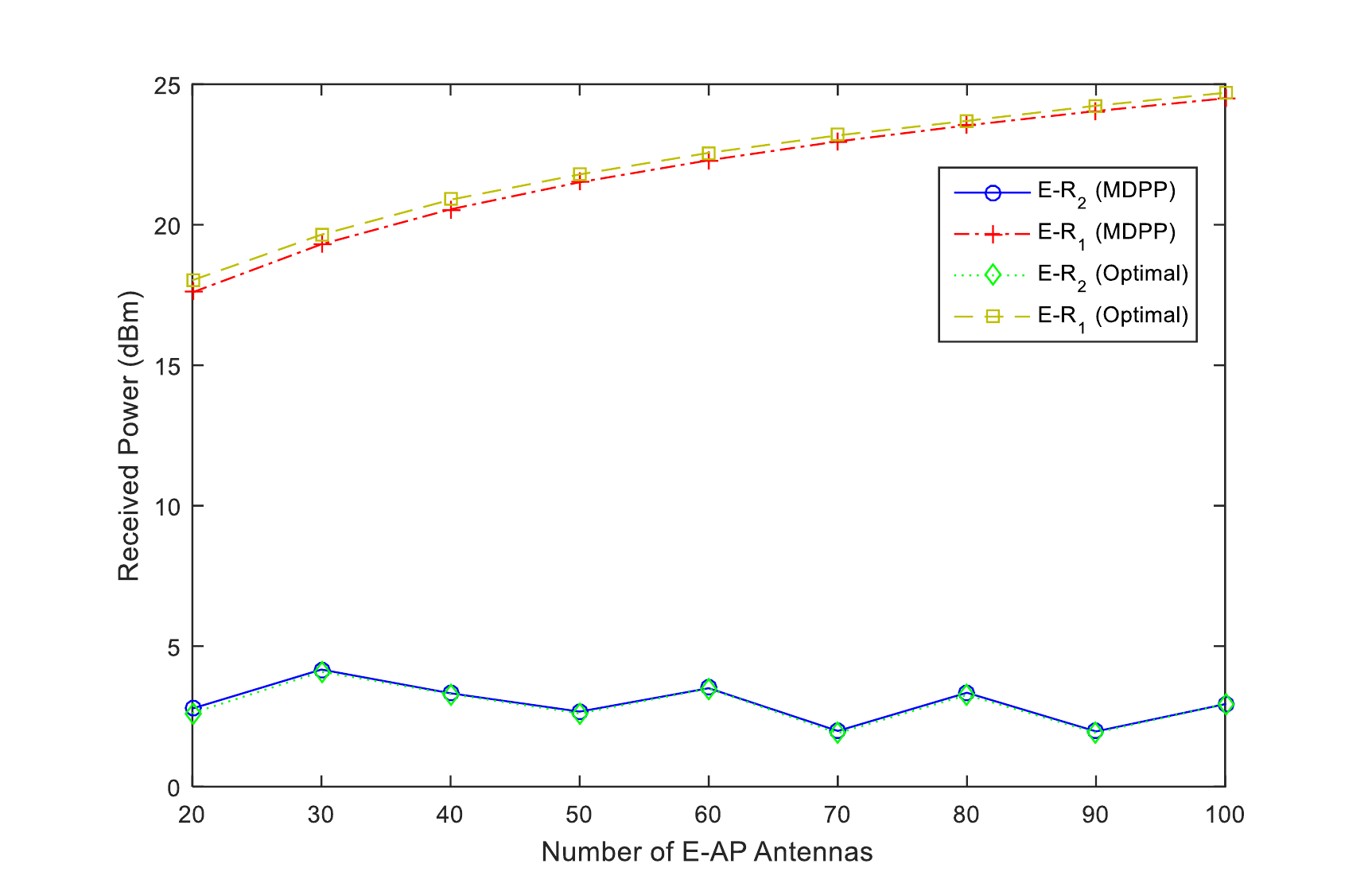}
\caption{The average received power of each
E-R versus the number of the E-AP's antennas under the near-optimal Algorithm \ref{alg:RecvMax} (MDPP) and the optimal solution.}
\label{fig:recPNoMin}
\end{figure}

The performance of the MMF and QPF policies, which consider fairness among the E-Rs, is shown in Fig. \ref{fig:FairUsers}. The MMF policy tries to transfer an equal amount of power to the receivers, while the QPF  policy makes a trade-off between fairness among the E-Rs and their total received energy.
 The maximum and average transmit power of the E-AP are considered to be 10 and 5 W, respectively. Fig. \ref{fig:FairUsers}-(a) shows the average received power of each E-R versus the distance ratio of the E-Rs (denoted by $d_r$), which is defined as $d_r \triangleq \frac{d_f}{d_c}$, where $d_f$ and $d_c$ are the distances of the E-AP to the farther E-R and the closer E-AP, respectively.
To increase $d_r$, move the farther E-R upward away from the E-AP. First note that when $d_r$ equals one, then, the  E-Rs receive the same amount of power, as expected. Unlike the previous policies without fairness that devote almost all of its power to the closer E-R even when $d_r$ is a little more than one, the  MMF policy devotes an equal amount of power to both E-Rs irrespective of the $d_r$ value. However, such approach may lead to a drastic degradation in the total performance when the distances of the E-Rs to the E-AP are too different. In contrast, the QPF policy decreases  the amount of power of the farther E-R smoothly as a function of $d_r$. Hence, the QPF policy leads to a smooth increasing of difference in the amount of received power of the E-Rs when $d_r$ increases. On the other hand, the E-AP provides the required power level of the farther E-R if $d_r$ is much larger than one.


\begin{figure}
\centering
\includegraphics[height = 0.4\linewidth]{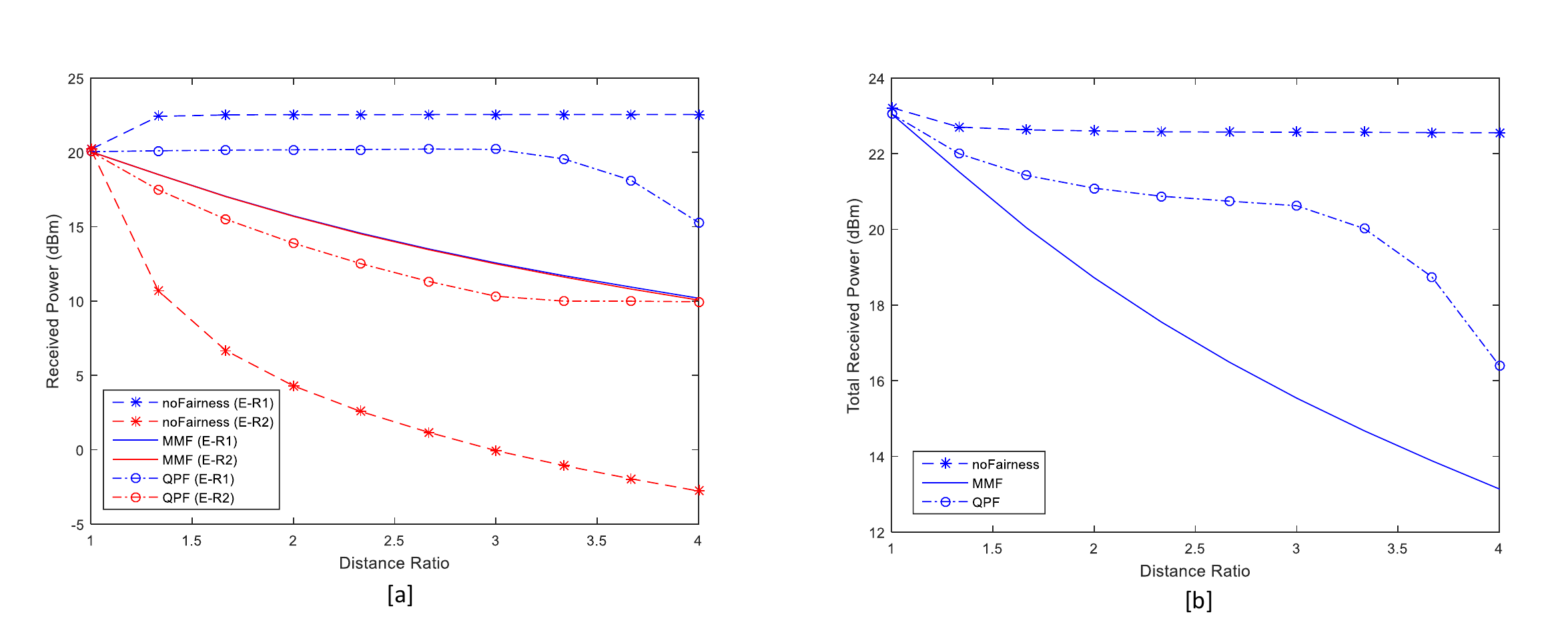}
\caption{(a) The average received power of E-Rs in case of no fairness, MMF and QPF   \ \ \ \ \ \   \ \        (b) Total average received power of E-Rs in case of no fairness, MMF and QPF versus the distance ratio }
\label{fig:FairUsers}
\end{figure}

Fig. \ref{fig:FairUsers}(b) shows the total average received power ($P_{R,T}$) of the E-Rs, and compares the policies with and without fairness. It can be seen from this figure that when considering fairness (either by MMF or QPF schemes), the value of $P_{R,T}$ reduces with the increase in the distance ratio. More specifically, under the MMF policy, which is a strict fair policy, the total received power is minimized. Moreover,  $P_{R,T}$ in both fairness models is a monotonically decreasing function of $d_r$.
 Finally, QPF policy makes a balance between the MMF policy and the policy without fairness.

\section{Conclusion}
In this paper, we proposed optimal and near-optimal policies for wireless power transfer by an E-AP to multiple E-Rs, where the E-AP is battery-operated (energy-limited case) or has limitations on its average transmit power (power-limited case). First, we considered a limited energy E-AP that only transfers the minimum required energy of the E-Rs. We formulated the problem for the case of one E-AP and one E-R and derived the optimal solution. We also proposed a solution based on MDPP using Lyapunov optimization for the cases when there are many E-Rs, and demonstrated the near-optimal performance of the proposed solution. Numerical results showed that the proposed near-optimal solution is within 5 percent of the optimal solution. We also considered the case when we have an E-AP with limited average transmit power capability, and proposed energy transfer policies under different scenarios. First, we proposed a policy to optimize the total received power of all E-Rs. We then demonstrated that in some cases, this policy may lead to unfair distribution of power between the E-Rs. To resolve this issue, we proposed two policies, MMF and QPF, which provide fairness in energy distribution between the E-Rs. The MMF policy tries to transfer an equal amount of power to receivers at the expense of decreasing the total performance. The QPF policy makes a trade-off between the total received power and fairness among E-Rs. It smoothly decreases the received power of farther E-Rs as their distance from E-AP increase. In addition, it provides the minimum required power for farther E-Rs. Numerical results showed that in case of two E-Rs, the MMF policy transfers equal amount of power to E-Rs. The QPF policy decreases the received power of the farther E-R with a slow slope as it is moved away from the E-AP, while always guaranteeing its minimum power level requirement.

\appendices
\section{Proof of Theorem \ref{theo:MinTran}}
\label{app:MinTran}
It is shown in \cite{Neely2010} that the optimal solution of the problem can be achieved using a policy in which $\mathbf{x}$ is only a function of $\mathbf{H}$ at the current timeslot. In this case, as $\mathbf{H}$ is i.i.d. in successive timeslots, $\mathbf{x}$ is also i.i.d. and we can simplify the formulas of transmitted and received power  and rewrite the problem as follows:
\begin{mini!}|l|
 {\{\mathbf{x}(\mathbf{H})\}}{\mathbb{E}[Tr(\mathbf{xx^\ast})]}{\label{P1A}}{}
\addConstraint{ \mathbb{E}[Tr(\mathbf{W_i}\mathbf{x}\mathbf{x}^\ast)]\ge P^{recv}_i, \forall i =  1,2,\ldots,K} \label{equ:P1ExpConstApp}
\addConstraint{Tr(\mathbf{x}\mathbf{x}^\ast)\le P_{peak}, \forall \mathbf{x}\in \mathcal{X}},
 \end{mini!}
where $\mathcal{X}$ denotes the set of possible transmission vector of transmitter. 

The optimal solution satisfies the constraint of equation \eqref{equ:P1ExpConstApp} as follows:
\begin{align}
\mathbb{E}[Tr(\mathbf{W}_1 \mathbf{x}\mathbf{x}^\ast)] &= \mathbb{E}[\lambda_{max}P_x]=P(\lambda_{max}>\lambda_{Th})P_{peak}\mathbb{E}[\lambda_{max}|\lambda_{max}>\lambda_{Th}] \nonumber \\
&=P_{peak}\int_{\lambda_{Th}}^\infty \alpha f_{\lambda_{max}}(\alpha)d\alpha=P_1^{recv},\ P_x=||\mathbf{x}||^2.
\end{align}
We want to show that if a policy yields $\mathbb{E}[P_z]<\mathbb{E}[P_x]=P_{peak}P(\lambda_{max}>\lambda_{Th})$, then the constraint of equation \eqref{equ:P1ExpConst} is not satisfied. Hence, the proposed solution yields the minimum of the average received energy.

We show that if $\mathbf{z}(\mathbf{H})$ yields a lower average power than the optimal solution, then the constraint of equation \eqref{equ:P1ExpConst} violates:
\begin{align*}
\mathbb{E}&[\lambda_{max}P_z]-P_1^{recv}=\mathbb{E}[\lambda_{max}P_z]-\mathbb{E}[\lambda_{max}P_x]=P(\lambda_{max}\ge\lambda_{Th})\mathbb{E}[\lambda_{max}P_z|\lambda_{max}\ge\lambda_{Th}] \\&+P(\lambda_{max}<\lambda_{Th})\mathbb{E}[\lambda_{max}P_z|\lambda_{max}<\lambda_{Th}]-P(\lambda_{max}\ge\lambda_{Th})\mathbb{E}[\lambda_{max}P_x|\lambda_{max}\ge\lambda_{Th}]\\
& = P(\lambda_{max}\ge\lambda_{Th})\mathbb{E}[\lambda_{max}(P_z-P_{peak})|\lambda_{max}\ge\lambda_{Th}]+P(\lambda_{max}<\lambda_{Th})\mathbb{E}[\lambda_{max}P_z|\lambda_{max}<\lambda_{Th}] \\
&\stackrel{a}{\le} \lambda_{Th}\bigg(P(\lambda_{max}\ge\lambda_{Th})(\mathbb{E}[(P_z-P_{peak})|\lambda_{max}\ge\lambda_{Th}])+P(\lambda_{max}<\lambda_{Th})\mathbb{E}[P_z|\lambda_{max}<\lambda_{Th}]\bigg) \\
 &= \lambda_{Th}(\mathbb{E}[P_z]-\mathbb{E}[P_x]) < 0. 
\end{align*}
{$(a)$ We have $P_z-P_{peak}\le 0$ and $\lambda_{max}\ge\lambda_{Th}$ in the first term, and $P_z\ge 0$ and $\lambda_{max}\le\lambda_{Th}$ in the second term. This completes the proof.}

\section{Proof of Lemma \ref{Lem:DriftMinTran}}
\label{app:Lemma2}
Considering the virtual queues update equation we can write,
\begin{align}
Z_i^2[l+1]&\le Z_i^2[l]+ Q^{d}_i[l] ^2+2Z_i[l] Q^{d}_i[l] \nonumber  \\ \Rightarrow  \Delta(\mathbf{Z}[l]) &\le \sum_{i=1}^K Z_i[l]\mathbb{E}[Q^{d}_i[l] |\mathbf{Z}[l]] + \frac{1}{2}\sum_{i=1}^K \mathbb{E}[Q^{d}_i[l]^2|\mathbf{Z}[l]].
\end{align}
The received signal at receiver $i$ has a maximum power of $P_{peak}$. Hence $\frac{1}{2}\sum_{i=1}^K \mathbb{E}[Q^{d}_i[l] ^2|\mathbf{Z}[l]] < \frac{K}{2}P_{peak}^2 = B$.

\section{Proof of Theorem \ref{theo:MaxRecv}}
\label{app:OptRecvMax}
By using similar arguments as in Appendix \ref{app:MinTran}, the problem can be reformulated as follows:
\begin{maxi!}|l|
 {\{\mathbf{x}(\mathbf{H})\}}{\mathbb{E}[P_xTr(\mathbf{W'}\mathbf{\tilde{x}}\mathbf{\tilde{x}}^\ast)]}{\label{P2A}}{}\label{equ:P2ObjFunc}
\addConstraint{ \mathbb{E}[P_x]\le P_{avg}} \label{equ:P2ConstrPavg}
\addConstraint{P_x\le P_{peak}}
\addConstraint{||\mathbf{\tilde{x}}||_2 = 1}, \label{equ:Norm2Eq}
 \end{maxi!}
where $\mathbf{x}=P_x \mathbf{\tilde{x}}$. As mentioned in Lemma \ref{Lemma:baseLemma}, the maximum of equation \eqref{equ:P2ObjFunc} is $\mathbb{E}[P_x \lambda_{max}^{\mathbf{w'}}]$ and this maximum is acheived when $\mathbf{\tilde{x}}=\mathbf{u}_{max}^{\mathbf{w'}}$. The optimal solution satisfies the constraint of equation \eqref{equ:P2ConstrPavg} with the equality as follows:
\begin{align*}
\mathbb{E}[P_x]=P_{peak}P(\lambda_{max}^{w'}\ge \lambda_{Th}^{w'})=P_{peak}(1-F_{\lambda_{max}^{w'}}(\lambda_{Th}^{w'}))=P_{avg}.
\end{align*}

We show that no policy yields a higher average received power than the optimal solution. We show that if $\mathbf{z}(\mathbf{H})$ yields a higher average received energy, then the constraint of equation \eqref{equ:P2ConstrPavg} violates.
\begin{align*}
&\mathbb{E}[\lambda_{max}P_z]-\mathbb{E}[\lambda_{max}P_x]=P(\lambda_{max}\ge\lambda_{Th})\mathbb{E}[\lambda_{max}P_z|\lambda_{max}\ge\lambda_{Th}] \\&+P(\lambda_{max}<\lambda_{Th})\mathbb{E}[\lambda_{max}P_z|\lambda_{max}<\lambda_{Th}]-P(\lambda_{max}\ge\lambda_{Th})\mathbb{E}[\lambda_{max}P_x|\lambda_{max}\ge\lambda_{Th}]\\
& = P(\lambda_{max}\ge\lambda_{Th})\mathbb{E}[\lambda_{max}(P_z-P_{peak})|\lambda_{max}\ge\lambda_{Th}]+P(\lambda_{max}<\lambda_{Th})\mathbb{E}[\lambda_{max}P_z|\lambda_{max}<\lambda_{Th}] \\
&\le \lambda_{Th}(P(\lambda_{max}\ge\lambda_{Th})(\mathbb{E}[(P_z-P_{peak})|\lambda_{max}\ge\lambda_{Th}])+P(\lambda_{max}<\lambda_{Th})\mathbb{E}[P_z|\lambda_{max}<\lambda_{Th}]) \\
 &= \lambda_{Th}(\mathbb{E}[P_z]-\mathbb{E}[P_x]) = \lambda_{Th}(\mathbb{E}[P_z]-P_{avg}) \le 0.
\end{align*}

This completes the proof.

\end{document}